\tikzstyle{block}=[rectangle,text centered]
\newtheorem{thrm}{Theorem}
\newtheorem{defn}{Definition}
\newtheorem{remark}{Remark}
\newtheorem{assumpt}{Assumption}
\theoremstyle{definition}
\title{\LARGE \bf
Adaptive\! Model\! Predictive\! Safety\! Certification\! for\! Learning-based\! Control}
\author{Alexandre Didier, Kim P. Wabersich and Melanie N. Zeilinger % stops a space
% <-this % stops a space
%\thanks{H. Kwakernaak is with Faculty of Electrical Engineering, Mathematics and Computer Science,
 %       University of Twente, 7500 AE Enschede, The Netherlands
 %       {\tt\small h.kwakernaak@autsubmit.com}}%
%\thanks{P. Misra is with the Department of Electrical Engineering, Wright State University,
 %       Dayton, OH 45435, USA
 %       {\tt\small pmisra@cs.wright.edu}}%
}
\begin{document}
\maketitle
\thispagestyle{empty}
\pagestyle{empty}

%%%%%%%%%%%%%%%%%%%%%%%%%%%%%%%%%%%%%%%%%%%%%%%%%%%%%%%%%%%%%%%%%%%%%%%%%%%%%%%%
\begin{abstract}
We propose an adaptive Model Predictive Safety Certification (MPSC) scheme for learning-based control of linear systems with bounded disturbances and uncertain parameters with known bounds. An MPSC is a modular framework, which can be used in combination with any learning-based controller to ensure state and input constraint satisfaction of a dynamical system by solving an online optimisation problem. By continuously connecting the current system state with a safe terminal set using a robust tube, safety can be ensured. Thereby, the main sources of conservative safety interventions are model uncertainties and short planning horizons. We develop an adaptive mechanism to improve the system model, which leverages set-membership estimation to guarantee recursively feasible and non-decreasing safety performance improvements. In order to accommodate short prediction horizons, iterative safe set enlargements using previously computed robust backup plans are proposed. Finally, we illustrate the increase of the safety performance through the parameter and safe set adaptation for numerical examples with up to 16 state dimensions. 
%The proposed formulation, which allows a recursive model update from data for better backup predictions with recursive feasibility guarantees, and iterative safe set enlargements to accommodate for short prediction horizons.
%The presented formulation allows for updating the MPSC scheme online with recursive feasibility guarantees as opposed to requiring an episodic learning setting.
%Through the use of a recursive parameter estimation, the size of the set of possible parameters can be reduced which results in an increase of the feasible set of the adaptive MPSC and thereby of the implicitly defined safe set defined by the MPSC. 

\end{abstract}

%%%%%%%%%%%%%%%%%%%%%%%%%%%%%%%%%%%%%%%%%%%%%%%%%%%%%%%%%%%%%%%%%%%%%%%%%%%%%%%%
\section{INTRODUCTION}\label{sec:introduction}
Learning-based control is seeing growing interest due to the abundance of data being collected in today's control systems. Especially reinforcement learning has demonstrated that controllers can be learned for complex or even uncertain cost functions and system models, see e.g. \cite{ng2006autonomous}, \cite{vanhasselt2016deep}. However, these methods often lack safety guarantees, i.e., the proposed control actions of the learning-based algorithm could lead the system into unsafe regions of the state space, e.g., a quadrotor approaching the ground with high speed, especially during exploration. This limits their application to safety-critical systems, e.g., autonomous transportation systems or medical applications, where certain state and input constraints are required to be satisfied for safety. 

In order to leverage the advantages from learning-based control while ensuring constraint satisfaction, modular, invariance-based safety frameworks have been developed using control barrier functions, see e.g. \cite{wieland2007constructive} and \cite{ames2019control}, or Hamilton-Jacobi reachability, as discussed, e.g., in \cite{gillulay2011guaranteed} and \cite{chen2018hamilton}.
%In order to leverage the advantages from learning-based control while ensuring constraint satisfaction, a modular safety framework for invariance-based methods has been considered in \cite{wabersich2018scalable} including reachability analysis in \cite{akametalu2014reachability}. 
As these approaches can be computationally challenging or difficult to design in the case of larger scale systems, they have been extended using Model Predictive Control (MPC) techniques, see e.g. \cite{rawlings2017model}, providing a scalable safety framework for linear dynamics in \cite{wabersich2018linear}, with extensions for probabilistic, nonlinear or distributed systems in \cite{wabersich2021probabilistic}, \cite{wabersich2021safe} and \cite{muntwiler2020distributed}, respectively. Here, a predictive control problem is solved at every time step to find the closest input to a proposed learning-based input together with a trajectory satisfying all state and input constraints and leading to a terminal safe set. This safe set is a set in the state space, which ensures that constraint satisfaction can be guaranteed at all future time steps through the use of a safety controller. The approach itself implicitly defines a safe set through the feasible set of the predictive control problem, ensuring the existence of a safe backup trajectory for the system. 
%However, the existing MPSC methods are not able to directly leverage new data in order to improve estimates of potentially unknown parameters of the system online without losing feasibility guarantees. Instead, an episodic setting is required where the MPSC is redesigned after collecting new data in order to improve the system model, as there are no recursive feasibility guarantees when updating the model.
While existing formulations \cite{wabersich2021probabilistic,wabersich2021safe,muntwiler2020distributed} are tailored to specific model classes, they do not provide a principled mechanism to adaptively refine the underlying system model using incoming state measurements, while maintaining recursive feasibility guarantees.

In this paper, we propose an adaptive Model Predictive Safety Certification (MPSC) scheme, which considers linear models with parametric uncertainties and unknown but bounded additive disturbances. The proposed scheme allows to augment any learning-based controller such that state and input constraint satisfaction properties are ensured for all future time steps. 
%We leverage recent advances in the Model Predictive Control (MPC) literature, where an optimisation problem is solved in order to control a system, see e.g. \cite{rawlings2017model} for an overview. Robust adaptive MPC frameworks have been developed 
Instead of performing episodic model learning updates, we leverage recent results from adaptive MPC literature, see \cite{lorenzen2019robust}, \cite{lu2019robust}, \cite{kohler2019linear} and \cite{kohler2020nonlinear}, to estimate uncertain parameters in the system dynamics online. By using set-membership estimation, implausible model parametrisations are recursively eliminated, see \cite{milanese1991optimal}. 
%This allows improving the system model while guaranteeing safety with respect to these uncertain parameters as well as exogenous disturbances, as seen in %\cite{tanaskovic2014adaptive}, 
%Similar to \cite{lorenzen2019robust}, \cite{lu2019robust}, \cite{kohler2019linear} and \cite{kohler2020nonlinear}, 
This results in a rigorous adaptive refinement of the MPSC scheme, which ensures safety with respect to the uncertain parameters as well as exogenous disturbances through recursive feasibility and guarantees a non-deteriorating performance. 
Through less restrictive assumptions on the terminal set used in the predictive control problem, the design procedure for the proposed adaptive MPSC scheme is simplified compared to previous robust adaptive MPC schemes.
Additionally, we propose a terminal safe set enlargement similar to \cite{bujarbaruah2018adaptive}, which reduces the effect of potentially short planning horizons on the performance due to real-time computation requirements. The terminal safe set enlargement can be performed online using solved instances of the MPSC optimisation problem by using the convex hull of the corresponding tubes.
%As the performance of the MPSC scheme is potentially limited by the potentially short planning horizons to the terminal safe set due to real-time solution requirements, we propose a terminal safe set enlargement. 
%Through solved instances of the MPSC optimisation problem the terminal safe set can be enlarged iteratively by using the convex hull of the corresponding tubes. 
%We also provide a novel terminal set enlargement for a homothetic tube formulation, which allows for a recursively feasible optimisation problem.

%The paper is structured as follows. 
In this paper we focus on a linear system model as specified in Section~\ref{sec:preliminaries}, for which we derive the proposed method in Section~\ref{sec:ampsc}. 
%In Section~\ref{sec:nonlinearextension} we show that recursive model updates of the proposed method can be directly extended to nonlinear system models. 
A discussion of an efficient design procedure for the adaptive MPSC using polytopic disturbance and parameter sets and homothetic tubes is provided in Section~\ref{sec:effampsc}. Finally, a numerical example for a chain of mass-spring-damper systems is provided in Section~\ref{sec:numericalexample} to illustrate the increase in size of the resulting safe sets and a comparison to the MPSC in \cite{wabersich2018linear} is provided.

%%%%%%%%%%%%%%%%%%%%%%%%%%%%%%%%%%%%%%%%%%%%%%%%%%%%%%%%%%%%%%%%%%%%%%%%%%%%%%%%
\section{PRELIMINARIES}\label{sec:preliminaries}
\textit{Notation:} 
The set of integers ranging from $a$ to $b$ is denoted by $\mathcal{I}_{[a,b]}$, the set of all positive integers is $\mathcal{I}_{\geq 0}$ and $2^{\mathbb{A}}$ denotes the power set of the set $\mathbb{A}$.
We define the unit hypercube as $\mathbb{B}_n=\{x\in\mathbb{R}^n|\;\Vert x \Vert_\infty\leq 0.5\}$. The Minkowski sum of two sets $\mathbb{A}\subseteq\mathbb{R}^n$ and $\mathbb{B}\subseteq\mathbb{R}^n$ is given by $\mathbb{A}\oplus\mathbb{B}=\{a+b|\;a\in\mathbb{A}, b\in\mathbb{B}\}$ with $a,b\in\mathbb{R}^n$. 
The convex hull of a set $\mathbb{A}$ is denoted as $\textup{co}(\mathbb{A})$ and the $i$-th entry of the vector $a$ is denoted $[a]_i$. The projection of a set $\mathbb{A}\subset\mathbb{R}^m$ onto the first $n$ dimensions, where $m\geq n$, is given by $\textup{Proj}^n(\mathbb{A})$ and onto the last $n$ dimensions by $\textup{Proj}_n(\mathbb{A})$.
\subsection{Problem Description}\label{sec:problemdescription}
We consider uncertain discrete-time linear dynamics of the form
\begin{equation} \label{eq:lindynamics}
	x_{k+1}=A(\theta)x_k+B(\theta)u_k+w_k,
\end{equation}
with states $x_k\in\mathbb{R}^n$, inputs $u_k\in\mathbb{R}^m$, disturbances $w_k\in\mathbb{W}\subseteq\mathbb{R}^n$ and uncertain parameters $\theta\in\mathbb{R}^p$.
We assume that the true system dynamics are captured by \eqref{eq:lindynamics} with parameters equal to their true value $\theta=\theta^*$. 
%An extension of the proposed method to parametrised nonlinear dynamics is discussed in Section~\ref{sec:nonlinearextension}. 
The considered disturbance is bounded in a compact set $\mathbb{W}$ and the parameters $\theta$ lie within an a priori known, compact set of parameters $\Theta_0$, which includes the true value $\theta^*$. 
\begin{remark}
The considered problem description also captures nonlinear systems, where a range of parameters $\theta$ can explain the system evolution if the disturbance set $\mathbb{W}$ is enlarged to encompass the error between the considered linear model and the true nonlinear dynamics. An extension to fully support nonlinear dynamics models is given in Appendix~\ref{sec:nonlinearextension}.
\end{remark}
%Potentially nonlinear parts of the dynamics can be captured by the disturbance $w_k$ in this linear formulation. 
This is a common problem setup used in robust adaptive model predictive control frameworks, see e.g. \cite{lorenzen2019robust}, \cite{lu2019robust} and \cite{kohler2019linear}. The uncertain parameters $\theta$ are assumed to enter the dynamics \eqref{eq:lindynamics} affinely as follows:

\begin{assumpt}\label{ass:bounded}
%\begin{enumerate}
%\item The disturbance $w_k$ is bounded by a convex polytope 
%\begin{equation}
%w_k\in\mathbb{W}=\{w\in\mathbb{R}^n |\; H_ww\leq h_w\},
%\end{equation}
%with $H_w\in\mathbb{R}^{n_w\times n}$ and $h_w\in\mathbb{R}^{n_w}$. 
The system matrices $A(\theta)$ and $B(\theta)$ depend affinely on the parameter vector $\theta\in\mathbb{R}^p$ such that
\begin{equation}
(A(\theta),B(\theta))=(A_0,B_0)+\sum_{i=1}^p(A_i,B_i)[\theta]_i,
\end{equation}
where $A_0, A_i\in\mathbb{R}^{n\times n}$ and $B_0,B_i\in\mathbb{R}^{n\times m}$
%\item The uncertain parameters $\theta$ are bounded in a convex polytope 
%\begin{equation}
%\theta\in\Theta_0=\{\theta\in\mathbb{R}^p |\; H_{\theta_0}\theta\leq h_{\theta_{0}}\},
%\end{equation}
%with $H_{\theta_0}\in\mathbb{R}^{n_\theta \times p}$ and $h_{\theta_0}\in\mathbb{R}^{\theta}$, which contains the true, but unknown parameter vector $\theta^*$.
%\end{enumerate}
\end{assumpt}
Note that such a model description can be derived from a linear system model \eqref{eq:lindynamics} by reformulating parameters which affect the system matrices nonlinearly as new parameters $\theta$ if their influence can be bounded, as is done, e.g., in \cite{didier2021robust}.
The system \eqref{eq:lindynamics} is subject to polytopic safety-critical state and physical input constraints given by
\begin{equation}\label{eq:stateinputconstr}
(x_k,u_k)\in\mathbb{Z}=\{(x,u)\in\mathbb{R}^n\times\mathbb{R}^m|\;Fx+Gu\leq z\},
\end{equation}
where $F\in\mathbb{R}^{n_z\times n}$, $G\in\mathbb{R}^{n_z\times m}$ and $z\in\mathbb{R}^{n_z}$. 
The projection of the constraint set $\mathbb{Z}$ onto the state space $\mathbb{R}^n$ and input space $\mathbb{R}^m$ is defined as 
$\mathbb{X}=\textup{Proj}^n(\mathbb{Z})$ and 
$\mathbb{U}=\textup{Proj}_m(\mathbb{Z})$, respectively.

%and given a state $x\in\mathbb{X}$, the set of admissible inputs is defined as
%\begin{equation}
%\mathbb{U}(x)=\{u\in\mathbb{R}^m|\;Fx+Gu\leq\mathbf{1}\}.
%\end{equation}

%%%%%%%%%%%%%%%%%%%%%%%%%%%%%%%%%%%%%%%%%%%%%%%%%%%%%%%%%%%%%%%%%%%%%%%%%%%%%%%%
\subsection{Parameter Identification}\label{sec:parameteridentification}
Instead of inferring parameter estimates a priori from data as done in, e.g., \cite{wabersich2021safe}, we begin with a set of possible parameters, which will iteratively be refined online using incoming state measurements. More precisely, starting from an initial uncertainty set $\Theta_0$, which could arise in practice from, e.g., production tolerances or tasks with uncertain parameters like lifting an object with uncertain mass as in \cite{didier2021robust}, new sets $\Theta_k$ are inferred with the properties given in the following assumption.
\begin{assumpt}\label{ass:parameter}
The parameter identification method fulfils for all $k\geq0$
\begin{enumerate}
\item Consistency of the identification method, i.e., if the true parameter $\theta^*\in\Theta_0\Rightarrow\theta^*\in\Theta_k$
\item Recursive set estimate inclusion, i.e., $\Theta_{k+1}\subseteq\Theta_{k}\subseteq\Theta_0$
\end{enumerate}
\end{assumpt}
Note that this assumption encompasses any parameter identification method for which a set of parameters is guaranteed to contain the true parameter. If consecutive sets are not recursively contained within each other, e.g., due to restrictions on $\Theta_k$ for computational reasons, the sets can be updated only when they are a subset of the previously used set. For example, if confidence sets obtained via Bayesian Linear Refression are used for the parameters such that $\textup{Pr}(\theta^*\in\Theta_k)\geq p_\theta$ for some desired probability level $p_\theta$, then recursive inclusion of the set estimates is not guaranteed given new data and needs to be verified online.
Different set-membership estimation methods exist that fulfil the properties in Assumption~\ref{ass:parameter} by construction, such as a polytopic formulation in \cite{milanese1991optimal} and a spherical formulation in \cite{dhaliwal2012set}. By using such an adaptive model refinement, we derive the adaptive MPSC scheme in the following section. The computation of polytopic parameter sets using set-membership estimation is detailed in Section~\ref{sec:efficientAMPSC}, which allows for a computationally efficient adaptive MPSC scheme.

%In order to identify the unknown parameters $\theta$, an identification method is used, which at every time step $k$, based %on state and input trajectories, computes a set of possible parameters $\Theta_k$ with the following properties.

%In this paper, we apply set-membership estimation which fulfils these properties by computing polytopic sets $\Theta_k=\{\theta\in\mathbb{R}^p|\;H_{\theta_k}\theta\leq h_{\theta_k}\}$, see e.g. \cite{milanese1991optimal} for an overview.
%However, methods such as \cite{dhaliwal2012set} using spherical sets $\Theta_k$ could also be used, particularly for nonlinear dynamics as seen in Section~\ref{sec:nonlinearextension}.

%%%%%%%%%%%%%%%%%%%%%%%%%%%%%%%%%%%%%%%%%%%%%%%%%%%%%%%%%%%%%%%%%%%%%%%%%%%%%%%%
\section{ADAPTIVE MODEL PREDICTIVE SAFETY CERTIFICATION}\label{sec:ampsc}
The proposed adaptive MPSC scheme is a modular framework, which takes as an input a learning-based control action $u_k^\mathscr{L}$ and the current state in order to verify the safety of the proposed action based on computing a safe forward plan using a sequentially improved data-driven model. 
A schematic of this framework can be seen in Figure \ref{fig:schematic}, where the applied control input corresponds to the MPSC policy, i.e., $u_k=\pi_{\textup{MPSC}}(u_k^\mathscr{L},x_k,\Theta_k,k)$.

\begin{figure}[h]
\vspace{0.2cm}
   \centering
   \begin{tikzpicture}[node distance=2cm, every text node part/.style={align=center}]

	\node (System) [rectangle, text centered, draw=black, minimum height=1.3cm, minimum width=4cm] {System \\ $x_{k+1}=A(\theta)x_k+B(\theta)u_k+w_k$};

	\node (Learning) [rectangle, text centered, draw=black, minimum height=1.3cm, minimum width= 3.5cm,below of=System, xshift=2.2cm, yshift=0.2cm] {Learning-based \\ Controller};

	\node (Parameter) [rectangle, text centered, draw=black, minimum height=1.3cm, minimum width= 3.5cm,below of=Learning, yshift=0.2cm] {Parameter Estimation};

	\node (Adaptive) [rectangle, text centered, draw=black, minimum height=1.3cm, minimum width= 3.5cm,below of=System, xshift=-2.2cm, yshift=-0.7cm] {Adaptive MPSC};

	\draw [->, to path={-| (\tikztotarget)}] (System)  --  node[anchor=south] {$x_k$} (4.2cm,0) |- (Learning.east);

	\draw [->, to path={-| (\tikztotarget)}] (System)  --  (4.2cm,0) |- (Parameter.east);

	\draw [->, to path={-| (\tikztotarget)}] (System)  --  (4.2cm,0) |- (Adaptive.east);

	\draw [->, to path={-| (\tikztotarget)}] (Learning.west)  --  node[anchor=south] {$u_k^\mathscr{L}$} (-0.2cm, -1.8cm) |- ([yshift=0.3cm]Adaptive.east);

	\draw [->, to path={-| (\tikztotarget)}] (Parameter.west)  --  node[anchor=north] {$\Theta_k$} (-0.2cm, -3.6cm) |- ([yshift=-0.3cm]Adaptive.east);

	\draw [->, to path={-| (\tikztotarget)}] (Adaptive.west)  -|  (-4.2cm, 0cm) -- node[anchor=south] {$u_k$} (System.west);

   \end{tikzpicture}
   \caption{Schematic of the adaptive Model Predictive Safety Certification framework. Given the current state of the system $x_k$, a learning-based control input $u_k^\mathscr{L}$ and the set of possible parameters $\Theta_k$, resulting from the parameter estimation, 
			the input $u_k$ to be applied to the system is provided by the adaptive Model Predictive Safety Certification scheme.}
   \label{fig:schematic}
\end{figure}
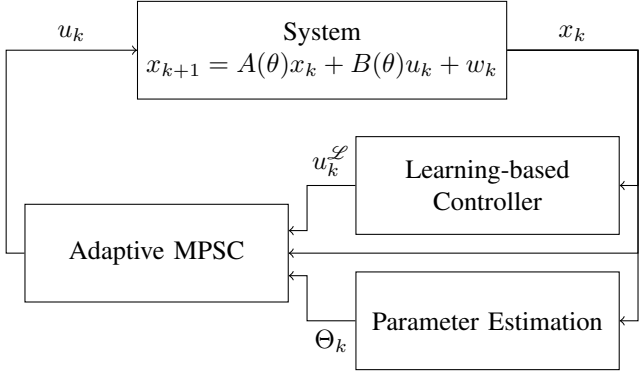
The proposed method is based on computing a state and input backup trajectory from the current state to a terminal safe set, with the goal of matching the first element of the input backup sequence with the desired learning input at each time step.
If the final predicted backup states are contained in the terminal safe set, constraint satisfaction can be guaranteed at all further time steps.
In the following, we begin by formalising the terminal safe set, which is used to define the MPSC algorithm in Section~\ref{sec:ampscalg}. The optimisation problem, which is solved at every time step in order to compute the backup trajectory is then discussed and the algorithm of the adaptive MPSC is provided. Finally, extensions of the scheme are presented by updating the terminal safe set. 

\subsection{Adaptive Model Predictive Safety Certification Algorithm} \label{sec:ampscalg}
In order to guarantee that the constraints \eqref{eq:stateinputconstr} can be satisfied for all times, the concept of a safe set is used, as defined in \cite{wabersich2018linear}, \cite{wabersich2021probabilistic}, \cite{wabersich2021safe} and \cite{wabersich2018scalable}. A safe set is a set in the state space, which ensures constraint satisfaction through the use of a safe control policy $\pi_\mathbb{S}$. 

\begin{defn}\label{def:safeset}
	A set $\mathbb{S}\subseteq\mathbb{X}$ is called a safe set for system \eqref{eq:lindynamics} if a safe backup control law $\pi_\textup{B}:\mathbb{R}^m\times\mathbb{R}^n\times 2^{\mathbb{R}^p}\times\mathcal{I}_{\geq0}\rightarrow\mathbb{U}$ is available such that for an arbitrary (learning-based) action $u_k^\mathscr{L}\in\mathbb{R}^m$, the application of the safe control policy
\begin{gather}
	\begin{align}
	&\pi_{\mathbb{S}}(u_k^\mathscr{L}, x_k, \Theta_k, k)=\nonumber \\
	&\begin{cases}
		u_k^\mathscr{L}, \textup{ if }(x_k, u_k^\mathscr{L}){\in}\mathbb{Z} {\wedge}\{A(\theta)x_k{+}B(\theta)u_k^\mathscr{L}\}{\oplus}\mathbb{W}{\subseteq}\mathbb{S}\;\forall\theta{\in}\Theta_k \\
		\pi_\textup{B}B(u_k^\mathscr{L}, x_k, \Theta_k, k), \textup{ otherwise}
	\end{cases}\nonumber
	\end{align}
\end{gather}
guarantees constraint satisfaction of the system state and inputs, i.e., $(x_k, \pi_{\mathbb{S}}(u_k^\mathscr{L}, x_k, \Theta_k, k))\in\mathbb{Z}$ for all $k\geq\bar{k}$ if $x_{\bar{k}}\in\mathbb{S}$.
\end{defn}
A safe set thus provides a guarantee that the state and input constraints are satisfied for all times $k\geq\bar{k}$ by using the safe control policy $\pi_{\mathbb{S}}(u_k^\mathscr{L}, x_k, \Theta_k, k)$ if the state $x_{\bar{k}}$ is in the safe set $\mathbb{S}$ at time step $\bar{k}$. 
Note that for the convex polytopes resulting from set-membership estimation, it suffices to check the condition $\{A(\theta^j_k)x_k+B(\theta^j_k)u_k^\mathscr{L}\}\oplus\mathbb{W}\subseteq\mathbb{S}$ at every vertex $\theta_k^j$ of the polytope $\Theta_k$.

\begin{remark}
While a Robust Positively Invariant (RPI) set used in robust MPC, see e.g. \cite[Chapter 2.6]{rawlings2017model}, requires that any possible state evolution starting inside the set will be contained in the set, Definition~\ref{def:safeset} of a safe set only requires that starting from a certain subset implies safety for all future times. This allows, e.g., to define a safe set using expert knowledge without the need for expensive offline computations.
However, principled robust invariant set computations can be employed and are available for parametric uncertainties, e.g., according to the algorithm provided in \cite{pluymers2005efficient}, even though they suffer from limited scalability. Ellipsoidal RPI sets for linear feedback controllers can be computed through semi-definite programming, see e.g. \cite{wabersich2018scalable} and \cite[Appendix]{kohler2019linear}.
\end{remark}

Due to the uncertain model, the computation of the required backup trajectory can be conservative, motivating the use of recent advances in robust adaptive MPC schemes \cite{lorenzen2019robust}, \cite{lu2019robust}, \cite{kohler2019linear} and \cite{kohler2020nonlinear}. At every time step $k$, we compute a tube in the state space starting from the current state measurement $x_k$, which is guaranteed to contain the future states for any disturbances in $\mathbb{W}$ and uncertain parameters in $\Theta_k$ through the use of a tube control law $\kappa:\mathbb{R}^n\times\mathbb{R}^{n_v}\rightarrow\mathbb{R}^m$ with $n_v$ parameters. The tube then consists of sets $\mathbb{X}_{l|k}$, which are predicted at time $k$, given the tube control law $\kappa(x,v_{l|k})$, for $l\in\mathcal{I}_{[0,N]}$ future time steps given a horizon $N$, with the last polytope being constrained to lie in a terminal safe set $\mathbb{S}_f$. The last predicted set $\mathbb{X}_{N|k}$ is required to be a subset of a terminal safe set $\mathbb{S}_f$, which fulfils Definition~\ref{def:safeset} with the safe control policy $\pi_{\mathbb{S}_f}(u_k^\mathscr{L}, x_k, \Theta_k, k)$. This allows to ensure constraint satisfaction for further time steps if %the initial state is safe, i.e., the MPSC problem is feasible. 
%In order to guarantee that all possible states are contained within this tube, where $x_k\in\mathbb{X}_{0|k}$, it needs to hold that for all states $x_{l|k}$ in the set $\mathbb{X}_{l|k}$, given the disturbance set $\mathbb{W}$ and the uncertain parameter set $\Theta_k$, the next predicted states $x_{l+1|k}$ are in the sets $\mathbb{X}_{l+1|k}$. Constraint satisfaction is then guaranteed if for 
for all $x\in\mathbb{X}_{l|k}$, it holds that $(x,\kappa(x,v_{l|k}))\in\mathbb{Z}$.

The adaptive MPSC algorithm is a modular framework which uses a learning-based controller for performance, i.e., the goal is to apply the learning-based input if constraint satisfaction can be ensured. This objective is realised by minimising the norm of the difference of the first control input of the planned tube and the proposed learning-based input $u_k^\mathscr{L}$. The optimisation problem we solve at every time step is thus given as
\begin{subequations} \label{eq:optimisation}
\begin{alignat}{1}
\min_{v_{\cdot |k}, \mathbb{X}_{\cdot|k}} & \Vert u_k^\mathscr{L}-\kappa(x_k, v_{0|k})\Vert \\
\textup{s.t. } & \forall l\in\mathcal{I}_{[0,N-1]} \nonumber \\
& x_k \in\mathbb{X}_{0|k},  \label{eq:initialconstr} \\
& A(\theta)x+B(\theta)\kappa(x,v_{l|k})+w \in \mathbb{X}_{l+1|k}, \nonumber\\
&\qquad\qquad \forall x\in\mathbb{X}_{l|k}, w\in\mathbb{W}, \theta\in\Theta_k,\label{eq:tubeincconstr} \\
&(x,\kappa(x,v_{l|k}))\in\mathbb{Z}, \quad \forall x \in\mathbb{X}_{l|k}, \label{eq:stateconstr} \\
&\mathbb{X}_{N|k}\subseteq \mathbb{S}_f. \label{eq:terminalconstr}
\end{alignat}
\end{subequations}

As \eqref{eq:optimisation} is not guaranteed to be recursively feasible due to the weak terminal safe set assumption compared to a robust invariant set, a switching mechanism is introduced similar to \cite{wabersich2018linear} in case the optimisation problem becomes infeasible. The mechanism then switches to the last computed optimal solution of \eqref{eq:optimisation} at time step $\bar{k}$, i.e., $\kappa(x_{\bar{k}+l}, v^*_{l|\bar{k}})$ for $l\in\mathcal{I}_{[1,N]}$. This input sequence guarantees that the state reaches the terminal safe set according to \eqref{eq:terminalconstr}. At this point, if \eqref{eq:optimisation} remains infeasible, the backup controller according to Definition~\ref{def:safeset} is used, such that safety is ensured for all time steps. The described procedure is formalised in Algorithm~1. 
\begin{remark}
For a less intrusive safety filter algorithm, Line~10 in Algorithm~1 can be replaced with \\
\centerline{\textup{10: Solve \eqref{eq:optimisation} with horizon $N-k_{\textup{inf}}$,}}\\
which preserves the safety guarantees, similar to \cite{wabersich2021safe}.
\end{remark}

If the initial state $x_0$ of the system lies within the feasible set of \eqref{eq:optimisation} for the initial unknown parameter set $\Theta_0$, denoted as $\mathbb{X}_{\textup{feas}}(\Theta_0)$, or within the terminal safe set $\mathbb{S}_f$, Algorithm~1 guarantees constraint satisfaction for all time steps $k\geq0$ by construction. This follows from the set update of $\Theta_k$ in Line 3 of Algorithm~1, which ensures that $\theta^*\in\Theta_k$ for all $k$ if $\theta^*\in\Theta_0$ under Assumption~\ref{ass:parameter}. It is thus possible to show that the feasible set of \eqref{eq:optimisation} implicitly describes a safe set. Additionally, through the update of the parameter set $\Theta_k$ under Assumption~\ref{ass:parameter}, the size of the feasible set increases as the parameter estimate improves, i.e., $\mathbb{X}_{\textup{feas}}(\Theta_{k-1})\subseteq\mathbb{X}_{\textup{feas}}(\Theta_k)$ as $\Theta_{k-1}\supseteq\Theta_k$.
\vspace{-0.2cm}
\begin{algorithm}[H]\label{alg:1}
\caption{Adaptive Model Predictive Safety Certification Scheme.}
\begin{algorithmic}[1]
\State $k_{\textup{inf}}\leftarrow N-1$
\For{$k=0,1,\dots$} 
	\State Update $\Theta_k$ using the state measurement $x_k$
	\If{\eqref{eq:optimisation} is feasible}
		\State Apply $u_k\leftarrow \kappa(x_k,v^*_{0|k})$ to \eqref{eq:lindynamics}
		\State $k_{\textup{inf}}\leftarrow0$
	\Else
		\State $k_{\textup{inf}}\leftarrow k_{\textup{inf}}+1$
		\If{$k_{\textup{inf}}\leq N-1$}
			\State Apply $u_k\leftarrow \kappa(x_k,v^*_{k_{\textup{inf}}|k-k_{\textup{inf}}})$ to \eqref{eq:lindynamics}
		\Else
			\State Apply $u_k\leftarrow \pi_{\mathbb{S}_f}(u_k^\mathscr{L}, x_k, \Theta_k, k)$ to \eqref{eq:lindynamics}
		\EndIf
	\EndIf
\EndFor
\end{algorithmic}
\end{algorithm}
\vspace{-0.5cm}
\begin{thrm}\label{thrm:AMPSC}
	If Assumptions \ref{ass:bounded} and \ref{ass:parameter} hold, the control law $\pi_{\textup{MPSC}}(u_k^\mathscr{L},x_k,\Theta_k,k)$ resulting from Algorithm~1 is a safe backup controller and the set $\mathbb{X}_{\textup{feas}}(\Theta_k)\cup\mathbb{S}_f$ is the corresponding safe set at time step $k$ according to Definition~\ref{def:safeset}. Additionally, it holds that $\mathbb{X}_{\textup{feas}}(\Theta_{0})\subseteq\mathbb{X}_{\textup{feas}}(\Theta_{1})\subseteq\dots\subseteq\mathbb{X}_{\textup{feas}}(\Theta_{k})$ for all time steps $k>0$.
\end{thrm}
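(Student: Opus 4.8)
The plan is to prove the theorem in three parts: (i) constraint satisfaction for all future time steps starting from the claimed safe set, (ii) identifying $\mathbb{X}_{\textup{feas}}(\Theta_k)\cup\mathbb{S}_f$ as precisely the safe set associated with $\pi_{\textup{MPSC}}$, and (iii) the monotonicity of the feasible sets. The argument is essentially a recursive-feasibility induction in the spirit of \cite{wabersich2018linear}, but with the added bookkeeping for the online parameter set updates.

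\textbf{Part (i): safety of the control law.} Suppose $x_{\bar k}\in\mathbb{X}_{\textup{feas}}(\Theta_{\bar k})\cup\mathbb{S}_f$; I would argue by induction over $k\geq\bar k$ that the state remains in $\mathbb{X}_{\textup{feas}}(\Theta_{k})\cup\mathbb{S}_f$ and that $(x_k,u_k)\in\mathbb{Z}$. First note that by Assumption~\ref{ass:parameter}(1) and the update in Line~3, $\theta^*\in\Theta_k$ for all $k$, so the true successor state $x_{k+1}=A(\theta^*)x_k+B(\theta^*)u_k+w_k$ is always among the states accounted for by the tube constraints \eqref{eq:tubeincconstr} and by the safe-set condition in Definition~\ref{def:safeset}. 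The induction splits into the cases of Algorithm~1: if \eqref{eq:optimisation} is feasible at $k$ (Lines~4--6), then constraint satisfaction at step $k$ follows from \eqref{eq:initialconstr} and \eqref{eq:stateconstr} with $l=0$, and the tube inclusion \eqref{eq:tubeincconstr} with $\theta=\theta^*$, $w=w_k$ shows $x_{k+1}\in\mathbb{X}_{1|k}$; the shifted candidate $(v^*_{1|k},\dots,v^*_{N|k})$ together with the shifted tubes and one appended step inside $\mathbb{S}_f$ (using that $\mathbb{S}_f$ satisfies Definition~\ref{def:safeset}, hence admits a backup tube) is a feasible solution at $k+1$ for the possibly-larger parameter set $\Theta_{k+1}\subseteq\Theta_k$, so $x_{k+1}\in\mathbb{X}_{\textup{feas}}(\Theta_{k+1})$. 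If \eqref{eq:optimisation} is infeasible with $k_{\textup{inf}}\leq N-1$ (Lines~8--10), the stored optimal solution from the last feasible step $k-k_{\textup{inf}}$ still certifies via \eqref{eq:tubeincconstr}--\eqref{eq:stateconstr} that $x_k\in\mathbb{X}_{k_{\textup{inf}}|k-k_{\textup{inf}}}$ and that applying $\kappa(x_k,v^*_{k_{\textup{inf}}|k-k_{\textup{inf}}})$ keeps $(x_k,u_k)\in\mathbb{Z}$ and drives $x_{k+1}$ into the next tube set; crucially, because the parameter sets only shrink, a solution computed at $k-k_{\textup{inf}}$ with $\Theta_{k-k_{\textup{inf}}}$ remains valid for the true parameter at all later steps. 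When $k_{\textup{inf}}$ reaches $N-1$ and then exceeds it (Line~12), the terminal constraint \eqref{eq:terminalconstr} guarantees the state has by then entered $\mathbb{S}_f$, so $\pi_{\mathbb{S}_f}$ takes over and Definition~\ref{def:safeset} directly yields constraint satisfaction for all remaining steps.

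\textbf{Part (ii): the safe set is exactly $\mathbb{X}_{\textup{feas}}(\Theta_k)\cup\mathbb{S}_f$.} From Part (i), every state in $\mathbb{X}_{\textup{feas}}(\Theta_k)\cup\mathbb{S}_f$ is safe under $\pi_{\textup{MPSC}}$, which is the inclusion needed to verify Definition~\ref{def:safeset} with backup law $\pi_\textup{B}=\pi_{\textup{MPSC}}$ itself; I would also note that whenever $(x_k,u_k^\mathscr{L})\in\mathbb{Z}$ and $\{A(\theta)x_k+B(\theta)u_k^\mathscr{L}\}\oplus\mathbb{W}\subseteq\mathbb{X}_{\textup{feas}}(\Theta_k)\cup\mathbb{S}_f$ for all $\theta\in\Theta_k$, the learning input itself is admissible as the first tube step, so $\pi_{\textup{MPSC}}$ has the required switching structure and $\mathbb{X}_{\textup{feas}}(\Theta_k)\cup\mathbb{S}_f$ is indeed the associated safe set.

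\textbf{Part (iii): monotonicity.} This is the quickest part and is already sketched in the text preceding the theorem: since $\Theta_{k}\subseteq\Theta_{k-1}$ by Assumption~\ref{ass:parameter}(2), any $(v_{\cdot|k},\mathbb{X}_{\cdot|k})$ feasible for \eqref{eq:optimisation} with $\Theta_{k-1}$ satisfies the $\theta$-robust inclusion \eqref{eq:tubeincconstr} a fortiori for every $\theta\in\Theta_k$, while \eqref{eq:initialconstr}, \eqref{eq:stateconstr}, \eqref{eq:terminalconstr} do not involve $\Theta$; hence feasibility is preserved and $\mathbb{X}_{\textup{feas}}(\Theta_{k-1})\subseteq\mathbb{X}_{\textup{feas}}(\Theta_{k})$, which chains to the stated ascending sequence.

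\textbf{Main obstacle.} The delicate point is the recursive-feasibility bookkeeping in Part (i): one must be careful that a backup plan computed at some earlier step $k-k_{\textup{inf}}$ under $\Theta_{k-k_{\textup{inf}}}$ is still executable and still certifies constraint satisfaction at the current step even though the parameter set has since been refined, and that the counter $k_{\textup{inf}}$ correctly tracks when the state is guaranteed to have reached $\mathbb{S}_f$ (exactly $N$ steps after the last feasible solve). The monotone shrinking of $\Theta_k$ is what makes this consistent — an earlier plan is robust against a superset of the current parameter uncertainty — but stating this cleanly, together with the case where \eqref{eq:optimisation} becomes feasible again mid-backup and resets $k_{\textup{inf}}$, is where the proof needs the most care.
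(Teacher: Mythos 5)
Your overall architecture (switching mechanism, stored backup plan, terminal hand-off to $\pi_{\mathbb{S}_f}$, and the monotonicity argument in Part (iii)) matches the paper's proof, and Parts (ii) and (iii) are correct. However, Part (i) contains a step that is genuinely false under the assumptions of this theorem: you claim that after a feasible solve at time $k$, the shifted candidate $(v^*_{1|k},\dots,v^*_{N|k})$ ``together with one appended step inside $\mathbb{S}_f$'' is feasible at $k+1$, so that $x_{k+1}\in\mathbb{X}_{\textup{feas}}(\Theta_{k+1})$. Definition~\ref{def:safeset} does not supply such an appended step: it only guarantees a pointwise control law under which the constraints $\mathbb{Z}$ are satisfied for all future times; it does not make $\mathbb{S}_f$ robustly invariant, does not guarantee the required input is realisable within the tube parametrisation $\kappa(x,v)$, and does not yield a set $\mathbb{X}_{N+1|k+1}\subseteq\mathbb{S}_f$ containing the one-step robust reachable set of $\mathbb{X}^*_{N|k}$. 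That is precisely the content of Assumption~\ref{ass:termset}, which is deliberately \emph{not} assumed in Theorem~\ref{thrm:AMPSC} --- the paper explicitly states that \eqref{eq:optimisation} is \emph{not} recursively feasible under the weak safe-set assumption, which is the entire reason the $k_{\textup{inf}}$ switching mechanism exists. For the same reason your induction invariant ``the state remains in $\mathbb{X}_{\textup{feas}}(\Theta_k)\cup\mathbb{S}_f$'' is false in general: while the stored backup plan is being executed the state may lie outside both sets, and only $(x_k,u_k)\in\mathbb{Z}$ can be asserted.

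The erroneous claim is fortunately not needed for the conclusion: your treatment of the infeasible branches (executing $\kappa(x_k,v^*_{k_{\textup{inf}}|k-k_{\textup{inf}}})$, using $\Theta_k\subseteq\Theta_{k-k_{\textup{inf}}}$ so the old tube still contains the true trajectory, and switching to $\pi_{\mathbb{S}_f}$ once $x\in\mathbb{X}^*_{N|k-N}\subseteq\mathbb{S}_f$ by \eqref{eq:terminalconstr}) is exactly the paper's argument and already covers the case where \eqref{eq:optimisation} fails at $k+1$. The repair is to weaken the invariant to ``either \eqref{eq:optimisation} is feasible, or $x_k$ lies in the tube set $\mathbb{X}^*_{k_{\textup{inf}}|k-k_{\textup{inf}}}$ of the last stored plan, or $\pi_{\mathbb{S}_f}$ has been engaged from a state in $\mathbb{S}_f$,'' and to conclude only constraint satisfaction rather than membership in the feasible set.
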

\begin{proof}
The first part of this proof is analogous to the proof of \cite[Theorem~III.5]{wabersich2018linear}. Consider $x_0\in\mathbb{S}_f\setminus\mathbb{X}_{\textup{feas}}(\Theta_0)$, through the initialisation of $k_{\textup{inf}}$, $\pi_{\mathbb{S}_f}(u_k^\mathscr{L}, x_k, \Theta_k, k)$ is applied to the system, which according to Definition~\ref{def:safeset} ensures constraint satisfaction for all future time steps. If $x_0\in\mathbb{X}_{\textup{feas}}(\Theta_0)$ and \eqref{eq:optimisation} is feasible for all $k\geq0$, it follows that safety is ensured through the constraints (\ref{eq:initialconstr}-d) as $\theta^*\in\Theta_k$ under Assumption~\ref{ass:parameter}, see e.g. \cite[Theorem~14]{lorenzen2019robust}. If at any given time step $\bar{k}$, \eqref{eq:optimisation} becomes infeasible, the optimal control input $ \kappa(x_{\bar{k}+k_{\textup{inf}}},v^*_{k_{\textup{inf}}|\bar{k}-1})$ from time step $\bar{k}-1$ is used until $x_{\bar{k}-1+N}\in\mathbb{X}_{\bar{k}-1+N|\bar{k}-1}\subseteq\mathbb{S}_f$ according to \eqref{eq:tubeincconstr} and \eqref{eq:terminalconstr}. At this point, Algorithm~1 switches to using the safe control input $\pi_{\mathbb{S}_f}(u_k^\mathscr{L}, x_k, \Theta_k, k)$. Thus constraint satisfaction is guaranteed by \eqref{eq:stateconstr} and the definition of the terminal safe set.
%The second part of the proof follows from Assumption~\ref{ass:parameter}. 
Through the parameter set update it holds that $\Theta_{k-1}\supseteq\Theta_k$, as follows from Assumption~\ref{ass:parameter}. It therefore holds that any state $x\in\mathbb{X}_{\textup{feas}}(\Theta_{k-1})$ must fulfil $x\in\mathbb{X}_{\textup{feas}}(\Theta_k)$ as constraint \eqref{eq:tubeincconstr} is fulfilled for all $\theta\in\Theta_{k}\subseteq\Theta_{k-1}$.
\end{proof}

\subsection{Iterative Enlargement of the Terminal Safe Set}\label{sec:ittermset}
While the terminal safe set can be enlarged using previously solved instances for adaptive MPC with unknown constant offset as is done in \cite{bujarbaruah2018adaptive}, it has not been discussed for adaptive MPC with parametrised system matrices to the best of the authors' knowledge. 
%As shown for other learning-based MPC schemes, e.g. \cite{rosolia2017learning}, we can enlarge the terminal safe set using previously solved instances of \eqref{eq:optimisation}. 
Using a convex formulation \eqref{eq:optimisation}, it is possible to show that the convex hull of all initial polytopes $\mathbb{X}_{0|k}$ can be added to the terminal safe set. The convex hull of the set of time steps, where \eqref{eq:optimisation} was successfully solved, is denoted as $\mathcal{M}(k)=\{i\in\mathcal{I}_{[0,k]}|\; x_i\in\mathbb{X}_{\textup{feas}}(\Theta_i)\}$ and we use
\begin{equation}
	\mathbb{X}^*_{0|\mathcal{M}(k)}=\textup{co}\left(\{\mathbb{X}^*_{0|i}\}_{i\in\mathcal{M}(k)}\right).
%\bigcup_{\substack{i\in\mathcal{M}(k)}} \mathbb{X}^*_{0|i}.
\end{equation}
The terminal safe set $\mathbb{S}_f$ can then be enlarged as follows.

\begin{thrm}\label{thrm:safesetenlargement}
	If Assumptions \ref{ass:bounded} and \ref{ass:parameter} hold and \eqref{eq:optimisation} is convex, then the set 
	\begin{equation}
		\mathbb{S}_f^{\mathcal{M}(k)}=\mathbb{X}^*_{0|\mathcal{M}(k)}\cup\mathbb{S}_f
	\end{equation}
	is again a safe set according to Definition~\ref{def:safeset} with a safe backup controller given by Algorithm~1 with terminal safe set $\mathbb{S}_f$.
\end{thrm}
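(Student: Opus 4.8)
The plan is to reduce the claim to Theorem~\ref{thrm:AMPSC} by showing that the newly appended region $\mathbb{X}^*_{0|\mathcal{M}(k)}$ is in fact already contained in $\mathbb{X}_{\textup{feas}}(\Theta_k)$, so that $\mathbb{S}_f^{\mathcal{M}(k)}=\mathbb{X}^*_{0|\mathcal{M}(k)}\cup\mathbb{S}_f\subseteq\mathbb{X}_{\textup{feas}}(\Theta_k)\cup\mathbb{S}_f$, and the right-hand side is already certified by Theorem~\ref{thrm:AMPSC} to be a safe set whose backup controller is Algorithm~1 with terminal safe set $\mathbb{S}_f$.

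First I would prove the pointwise inclusion $\mathbb{X}^*_{0|i}\subseteq\mathbb{X}_{\textup{feas}}(\Theta_i)$ for each $i\in\mathcal{M}(k)$. The key observation is that in \eqref{eq:optimisation} only the initial constraint \eqref{eq:initialconstr} couples the decision variables $(v_{\cdot|k},\mathbb{X}_{\cdot|k})$ to the measured state, whereas \eqref{eq:tubeincconstr}--\eqref{eq:terminalconstr} depend only on the tube and on $\Theta_k$; hence for any $x\in\mathbb{X}^*_{0|i}$ the stored optimiser $(v^*_{\cdot|i},\mathbb{X}^*_{\cdot|i})$ is a feasible point of \eqref{eq:optimisation} evaluated at $x$ with parameter set $\Theta_i$, i.e.\ $x\in\mathbb{X}_{\textup{feas}}(\Theta_i)$. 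Since $i\le k$ gives $\Theta_k\subseteq\Theta_i$, the monotonicity $\mathbb{X}_{\textup{feas}}(\Theta_i)\subseteq\mathbb{X}_{\textup{feas}}(\Theta_k)$ from Theorem~\ref{thrm:AMPSC} then yields $\mathbb{X}^*_{0|i}\subseteq\mathbb{X}_{\textup{feas}}(\Theta_k)$ for all $i\in\mathcal{M}(k)$. Because \eqref{eq:optimisation} is assumed convex, a convex combination of feasible tubes is again a feasible tube and the corresponding convex combination of initial polytopes again satisfies \eqref{eq:initialconstr}, so $\mathbb{X}_{\textup{feas}}(\Theta_k)$ is a convex set; therefore $\mathbb{X}^*_{0|\mathcal{M}(k)}=\textup{co}(\{\mathbb{X}^*_{0|i}\}_{i\in\mathcal{M}(k)})\subseteq\mathbb{X}_{\textup{feas}}(\Theta_k)$, and consequently $\mathbb{S}_f^{\mathcal{M}(k)}\subseteq\mathbb{X}_{\textup{feas}}(\Theta_k)\cup\mathbb{S}_f\subseteq\mathbb{X}$.

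To conclude the safe-set property I would run an induction over time steps $j\ge k$ for the safe policy associated with $\mathbb{S}_f^{\mathcal{M}(k)}$ and backup law given by Algorithm~1. If at step $j$ the learning input $u_j^{\mathscr{L}}$ is applied, then $(x_j,u_j^{\mathscr{L}})\in\mathbb{Z}$ and $\{A(\theta)x_j+B(\theta)u_j^{\mathscr{L}}\}\oplus\mathbb{W}\subseteq\mathbb{S}_f^{\mathcal{M}(k)}$ for all $\theta\in\Theta_j$; since $\theta^*\in\Theta_j$ by Assumption~\ref{ass:parameter}, the successor state stays in $\mathbb{S}_f^{\mathcal{M}(k)}$ and the induction continues with all constraints met. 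If instead the backup is triggered, then $x_j\in\mathbb{S}_f^{\mathcal{M}(k)}\subseteq\mathbb{X}_{\textup{feas}}(\Theta_k)\cup\mathbb{S}_f\subseteq\mathbb{X}_{\textup{feas}}(\Theta_j)\cup\mathbb{S}_f$, using once more the monotonicity of Theorem~\ref{thrm:AMPSC}, so Algorithm~1 started from $x_j$ guarantees constraint satisfaction for all subsequent steps by Theorem~\ref{thrm:AMPSC}. In either branch Definition~\ref{def:safeset} is satisfied, which proves the claim.

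I expect the main obstacle to be the convexity step, namely carefully justifying that the convex hull of initial polytopes that were computed at different times — hence with respect to different parameter sets $\Theta_i$ — is still a feasible initialisation of \eqref{eq:optimisation} with the current set $\Theta_k$. The reconciliation of the differing parameter sets relies on $\Theta_k\subseteq\Theta_i$ together with the feasible-set monotonicity already established in Theorem~\ref{thrm:AMPSC}, while convexity of the feasible set in the measured state is exactly what the convexity hypothesis on \eqref{eq:optimisation} buys us.
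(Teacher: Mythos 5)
Your proposal is correct and follows essentially the same route as the paper's proof: establish $\mathbb{X}^*_{0|\mathcal{M}(k)}\subseteq\mathbb{X}_{\textup{feas}}(\Theta_k)$ by combining convexity of the feasible set (from convexity of \eqref{eq:optimisation}) with feasibility of each stored optimiser and the monotonicity $\mathbb{X}_{\textup{feas}}(\Theta_i)\subseteq\mathbb{X}_{\textup{feas}}(\Theta_k)$, and then invoke Theorem~\ref{thrm:AMPSC} together with the fact that a union of safe sets is a safe set. You simply spell out more explicitly two points the paper leaves terse — that only constraint \eqref{eq:initialconstr} couples the tube to the measured state, and the final induction verifying Definition~\ref{def:safeset} — but the underlying argument is identical.
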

\begin{proof}
As \eqref{eq:optimisation} is assumed to be convex, it follows that for a fixed parameter set $\Theta_k$, the feasible set $\mathbb{X}_{\textup{feas}}(\Theta_k)$ of \eqref{eq:optimisation} is also convex at every time step $k>0$, see \cite{boyd2004convex}. It then follows that $\mathbb{X}^*_{0|\mathcal{M}(k)}\subseteq\mathbb{X}_{\textup{feas}}(\Theta_k)$ as any $x\in\mathbb{X}^*_{0|i}$ admits a feasible solution to \eqref{eq:optimisation} for all $i\in\mathcal{M}(k)$. As it holds that $\mathbb{X}_{\textup{feas}}(\Theta_{k})\subseteq\mathbb{X}_{\textup{feas}}(\Theta_{k+1})$ if the parameter set is updated and that the union of two safe sets is a safe set, the result follows.
\end{proof}
%Note that the terminal safe set can be further enlarged if the set $\mathbb{S}_f=\mathbb{X}_f$ fulfils Assumption~\ref{ass:termset} by using all predicted polytopes $\mathbb{X}^*_{l|k}$, i.e.
%\begin{equation}
%	\mathbb{X}^*_{\mathcal{I}_{[0,N-1]}|\mathcal{M}(k)}=\bigcup_{\substack{l\in\mathcal{I}_{[0,N-1]}\\ i\in\mathcal{M}(k)}} \mathbb{X}^*_{l|i}.
%\end{equation}
%
%\begin{corollary}
%	If Assumptions \ref{ass:bounded}, \ref{ass:parameter} and \ref{ass:termset} hold and \eqref{eq:optimisation} is convex, then the set
%	\begin{equation}\label{eq:termsafesetenlarged}
%		\mathbb{S}_f^{M(k)}=\textup{co}(\mathbb{X}^*_{\mathcal{I}_{[0,N-1]}|\mathcal{M}(k)}, \mathbb{X}_f)
%	\end{equation}
%	is a safe set according to Definition~\ref{def:safeset} with backup controller given by Algorithm~1 with a terminal set $\mathbb{S}_f$.
%\end{corollary}
%\begin{proof}
%The proof follows the proof of Theorem~\ref{thrm:safesetenlargement}. Additionally, the states in $\mathbb{X}_f$ admit a feasible solution to \eqref{eq:optimisation} with $\mathbb{S}_f=\mathbb{X}_f$ by using Assumption~\ref{ass:termset}, which allows using the convex hull of the union in \eqref{eq:termsafesetenlarged}. 
%\end{proof}

\subsection{A Recursively Feasible MPSC Scheme}\label{sec:recursiveMPSC}
While the safe set according to Definition~\ref{def:safeset} supports an easier design, the resulting implementation becomes more complex due to the required switching mechanism. As an alternative, we additionally consider the case of requiring an RPI terminal set, for which we additionally provide a data-driven design using past data in Section~\ref{sec:ittermset}. In order to provide a recursively feasible optimisation problem \eqref{eq:optimisation}, we require that under the control law $\kappa(x,v)$, a $v$ exists such that all possible uncertain state evolutions from the last predicted state polytope $\mathbb{X}^*_{N|k}$ will be robustly contained in the terminal safe set. 

\begin{assumpt} \label{ass:termset} 
Consider a non-empty terminal set $\mathbb{X}_f$ and a tube control law $\kappa(x,v)$ in \eqref{eq:optimisation}. For every set $\mathcal{X}\subseteq\mathbb{X}_f$, there exists a $v$, such that $(x,\kappa(x,v))\in\mathbb{Z}$ for all $x\in\mathcal{X}$ and such that for all $\theta\in\Theta_0$ it holds that
\begin{equation}A(\theta)\mathcal{X}\oplus B(\theta)\kappa(\mathcal{X},v)\oplus\mathbb{W}\subseteq\mathbb{X}_f. \nonumber
\end{equation}
%There exists a non-empty terminal set $\mathbb{X}_f$, such that for the tube control law $\kappa(x,v)$ in \eqref{eq:optimisation}, there exists $v$, such that $(x,\kappa(x,v))\in\mathbb{Z}$ for all $x\in\mathbb{X}_f$ and such that for all $\theta\in\Theta_0$ it holds that if the set $\mathcal{X}\subseteq\mathbb{X}_f$, it follows that
\end{assumpt}

Under Assumption~\ref{ass:termset}, recursive feasibility of \eqref{eq:optimisation} can be shown. Note that this generalised assumption contains specific robust adaptive MPC formulation such as \cite{lorenzen2019robust}, \cite{lu2019robust}, \cite{kohler2019linear} and \cite{kohler2020nonlinear} as special cases.

\begin{thrm}
	Let $\mathbb{S}_f=\mathbb{X}_f$. If Assumptions \ref{ass:bounded}, \ref{ass:parameter} and \ref{ass:termset} hold, then $\kappa(x_k,v^*_{0|k})$
	is a safe backup control law and $\mathbb{X}_{\textup{feas}}(\Theta_k)$ a corresponding safe set according to Definition~\ref{def:safeset}. In addition, the set $\mathbb{X}_{\textup{feas}}(\Theta_k)$ is a robust positively invariant set for a fixed $\Theta_k$.
\end{thrm}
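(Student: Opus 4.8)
The plan is to reduce all three claims to a single recursive-feasibility-type set inclusion, namely that the first predicted tube cross-section is contained in the feasible set: I would show that whenever \eqref{eq:optimisation} is feasible at $x_k$ with optimiser $(v^*_{\cdot|k},\mathbb{X}^*_{\cdot|k})$, then $\mathbb{X}^*_{1|k}\subseteq\mathbb{X}_{\textup{feas}}(\Theta_{k+1})$, and, keeping the parameter set frozen, $\mathbb{X}^*_{1|k}\subseteq\mathbb{X}_{\textup{feas}}(\Theta_{k})$. Since the realised state satisfies $x_{k+1}=A(\theta^*)x_k+B(\theta^*)\kappa(x_k,v^*_{0|k})+w_k$ with $\theta^*\in\Theta_k$ by Assumption~\ref{ass:parameter} and $w_k\in\mathbb{W}$, constraint \eqref{eq:tubeincconstr} with $l=0$ gives $x_{k+1}\in\mathbb{X}^*_{1|k}$, so this inclusion immediately delivers recursive feasibility of the MPSC problem together with the invariance statements.

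To prove the inclusion I would take any $\bar{x}\in\mathbb{X}^*_{1|k}$ and build a candidate for \eqref{eq:optimisation} at $\bar{x}$ with parameter set $\Theta_{k+1}$ by shifting: set $\mathbb{X}_{l}:=\mathbb{X}^*_{l+1|k}$ and $v_{l}:=v^*_{l+1|k}$ for $l\in\mathcal{I}_{[0,N-2]}$. This satisfies \eqref{eq:initialconstr} because $\bar{x}\in\mathbb{X}^*_{1|k}$, and inherits \eqref{eq:tubeincconstr} and \eqref{eq:stateconstr} for these indices from the original optimiser since $\Theta_{k+1}\subseteq\Theta_k\subseteq\Theta_0$. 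The missing piece is the terminal stage $l=N-1$, and here I would invoke Assumption~\ref{ass:termset} with $\mathcal{X}=\mathbb{X}^*_{N|k}$, which is admissible because $\mathbb{X}^*_{N|k}\subseteq\mathbb{S}_f=\mathbb{X}_f$ by the terminal constraint \eqref{eq:terminalconstr}: it yields a $v_{N-1}$ with $(x,\kappa(x,v_{N-1}))\in\mathbb{Z}$ for all $x\in\mathbb{X}^*_{N|k}$ and $A(\theta)\mathbb{X}^*_{N|k}\oplus B(\theta)\kappa(\mathbb{X}^*_{N|k},v_{N-1})\oplus\mathbb{W}\subseteq\mathbb{X}_f$ for all $\theta\in\Theta_0\supseteq\Theta_{k+1}$. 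Choosing $\mathbb{X}_{N}:=\mathbb{X}_f$ then satisfies \eqref{eq:tubeincconstr} and \eqref{eq:stateconstr} at $l=N-1$ as well as \eqref{eq:terminalconstr}, so the shifted tube is feasible and $\bar{x}\in\mathbb{X}_{\textup{feas}}(\Theta_{k+1})$; the frozen-parameter case $\Theta_{k+1}=\Theta_k$ is identical.

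From this the three conclusions follow. For the safe-set and safe-backup claim: if $x_{\bar{k}}\in\mathbb{X}_{\textup{feas}}(\Theta_{\bar{k}})$, then by induction using the inclusion together with $\mathbb{X}_{\textup{feas}}(\Theta_k)\subseteq\mathbb{X}_{\textup{feas}}(\Theta_{k+1})$ from Theorem~\ref{thrm:AMPSC}, \eqref{eq:optimisation} is feasible for all $k\geq\bar{k}$, so the else-branch of Algorithm~1 is never entered and $\pi_{\textup{MPSC}}(\cdot)=\kappa(x_k,v^*_{0|k})$; constraint \eqref{eq:stateconstr} at $l=0$ with $x=x_k\in\mathbb{X}^*_{0|k}$ then gives $(x_k,\kappa(x_k,v^*_{0|k}))\in\mathbb{Z}$, so Definition~\ref{def:safeset} holds with backup law $\pi_\textup{B}(\cdot)=\kappa(\cdot,v^*_{0|\cdot})$. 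For robust positive invariance with $\Theta_k$ fixed: for every $x\in\mathbb{X}_{\textup{feas}}(\Theta_k)$ and any feasible $v_0$, constraint \eqref{eq:tubeincconstr} at $l=0$ gives $A(\theta)x+B(\theta)\kappa(x,v_0)+w\in\mathbb{X}^*_{1}\subseteq\mathbb{X}_{\textup{feas}}(\Theta_k)$ for all $\theta\in\Theta_k$ and $w\in\mathbb{W}$, which is exactly robust positive invariance of $\mathbb{X}_{\textup{feas}}(\Theta_k)$ under the MPSC feedback.

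The hard part will be the terminal-stage construction: one must ensure Assumption~\ref{ass:termset} is applied to a set genuinely contained in $\mathbb{X}_f$ (guaranteed by \eqref{eq:terminalconstr}) and, in the parametrised implementations of Section~\ref{sec:effampsc} such as homothetic tubes, that the shifted tube with the appended terminal cross-section is representable within the chosen tube parametrisation and hence an admissible decision variable of \eqref{eq:optimisation}; this is where the specific robust adaptive MPC designs of \cite{lorenzen2019robust}, \cite{lu2019robust}, \cite{kohler2019linear} and \cite{kohler2020nonlinear} appear as special cases. A secondary point requiring care is the bookkeeping of the nested parameter sets $\Theta_{k+1}\subseteq\Theta_k\subseteq\Theta_0$, so that every constraint established for $\Theta_k$ or $\Theta_0$ transfers to $\Theta_{k+1}$.
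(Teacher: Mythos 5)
Your proposal is correct and follows essentially the same route as the paper's proof: a standard shift of the optimal tube by one step, with Assumption~\ref{ass:termset} applied to $\mathbb{X}^*_{N|k}\subseteq\mathbb{X}_f$ to append the terminal stage (setting the new terminal cross-section to $\mathbb{X}_f$ itself), followed by the observation that robust positive invariance is an immediate consequence of recursive feasibility. Your write-up is in fact more explicit than the paper's, in particular in establishing the stronger inclusion $\mathbb{X}^*_{1|k}\subseteq\mathbb{X}_{\textup{feas}}(\Theta_{k+1})$ rather than only feasibility at the realised successor state, but the underlying argument is identical.
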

\begin{proof}
	The proof follows standard recursive feasibility arguments similar to, e.g., \cite{lorenzen2019robust}. Consider \eqref{eq:optimisation} feasible at time step $\bar{k}$. The optimal input sequence $\kappa(x_{\bar{k}},v^*_{l|\bar{k}})$ for $l\in\mathcal{I}_{[1,N]}$ ensures that $x_{l-1|\bar{k}+1}\in\mathbb{X}^*_{l|\bar{k}}$ since $x_{0|\bar{k}+1}\in\mathbb{X}^*_{1|\bar{k}}$ and according to Assumption~\ref{ass:parameter}, $\Theta_{\bar{k}+1}\subseteq\Theta_{\bar{k}}$. As $\mathbb{X}^*_{N|\bar{k}}\subseteq\mathbb{X}_f$, we can set $\mathbb{X}^*_{N|\bar{k}+1}=\mathbb{X}_f$ according to Assumption~\ref{ass:termset}, which fulfills the terminal constraint \eqref{eq:terminalconstr} with $\mathbb{S}_f=\mathbb{X}_f$, such that state and input constraints are satisfied.
%According to Assumption~\ref{ass:termset}, the control law $\kappa$ then ensures that for all time steps $k>\bar{k}+N$, the polytope $\mathbb{X}_f$ is a candidate polytope 
% there exist polytopes $\mathbb{X}_{l|k}\subseteq\mathbb{X}_f$ for all $\theta\in\Theta_k\subseteq\Theta_0$ such that $(x,\sigma(x))\in\mathbb{Z}$ for all $x\in\mathbb{X}_{l|k}$ as $\theta^*\in\Theta_k$ according to Assumption~\ref{ass:parameter}. 
Robust positive invariance follows directly from recursive feasibility, as $x_k\in\mathbb{X}_{\textup{feas}}(\Theta_k)\Rightarrow x_{k+1}\in\mathbb{X}_{\textup{feas}}(\Theta_{k})$.
\end{proof}

The design of a terminal set $\mathbb{X}_f$ fulfilling Assumption~\ref{ass:termset} for homothetic tube sets $\mathbb{X}_{l|k}$ is discussed in \cite{lorenzen2019robust}, \cite{kohler2019linear} and a low-complexity terminal set for a 12-dimensional quadrotor example is presented in \cite{didier2021robust}, whereas the condition is implemented as a constraint in the optimisation problem in \cite{lu2019robust}. Note that if a terminal set $\mathbb{X}_f$ fulfills Assumption~\ref{ass:termset}, a terminal safe set enlargement similar to Section~\ref{sec:ittermset} can be performed using the convex hull of all computed solutions $\mathbb{X}_{l|k}^*$ and the terminal set $\mathbb{X}_f$, as feasibility of \eqref{eq:optimisation} is guaranteed. The resulting set is then a safe set according to Definition~\ref{def:safeset}, but does not verify Assumption~\ref{ass:termset}, for which we need a different approach tailored to a specific tube structure as presented in Section~\ref{sec:ittermrecfeas}.

%Note that the terminal safe set can be further enlarged if the set $\mathbb{S}_f=\mathbb{X}_f$ fulfils Assumption~\ref{ass:termset} by using all predicted polytopes $\mathbb{X}^*_{l|k}$, i.e.
%\begin{equation}
%	\mathbb{X}^*_{\mathcal{I}_{[0,N-1]}|\mathcal{M}(k)}=\bigcup_{\substack{l\in\mathcal{I}_{[0,N-1]}\\ i\in\mathcal{M}(k)}} \mathbb{X}^*_{l|i}.

\section{EFFICIENT DESIGN USING POLYTOPIC SETS}\label{sec:effampsc}
In this section, we provide details on how a computationally efficient adaptive MPSC problem can be designed for the linear case by leveraging the formulations in \cite{lorenzen2019robust}, \cite{lu2019robust} and \cite{kohler2019linear}. We then show how the specific structure can be exploited to obtain a data-driven terminal set enlargement, resulting in a recursively feasible optimisation problem \eqref{eq:optimisation}. 
\subsection{Homothetic Tube Formulation}\label{sec:efficientAMPSC}
The considered formulation makes use of recent reformulations of the constraints in \eqref{eq:optimisation} into linear constraints with respect to the optimisation variables in \cite{lorenzen2019robust}, \cite{lu2019robust} and \cite{kohler2019linear}. The considered sets $\mathbb{W}=\{w\in\mathbb{R}^n |\; H_ww\leq h_w\}$ and $\Theta_0=\{\theta\in\mathbb{R}^p |\; H_{\theta_0}\theta\leq h_{\theta_{0}}\}$ are assumed to be polytopic, with $H_w\in\mathbb{R}^{n_w\times n}$, $h_w\in\mathbb{R}^{n_w}$, $H_{\theta_0}\in\mathbb{R}^{n_\theta \times p}$ and $h_{\theta_0}\in\mathbb{R}^{n_\theta}$. In order to ensure polytopic sets $\Theta_k$, polytopic set-membership estimation is used, which consists of computing the set of all possible parameters that explain the system evolution given a set of possible disturbances $\mathbb{W}$. 
For the considered dynamics \eqref{eq:lindynamics}, given state measurements $x_{k-1}$ and $x_k$, this non-falsified set of parameters is given by
\begin{equation}
\Delta_k =\{\theta\in\mathbb{R}^p |\; x_k-(A(\theta)x_{k-1}+B(\theta)u_{k-1})\in\mathbb{W}\},
\end{equation}
which is polytopic and whose explicit formulation is given in \cite{lorenzen2019robust}.
The parameter set $\Theta_k$ is updated by taking the intersection of the previous set $\Theta_{k-1}$ and the non-falsified parameter set
$\Theta_k=\Theta_{k-1}\cap\Delta_k.$

%Thus, set-membership estimation fulfils the required properties in Assumption~\ref{ass:parameter} by construction. 
A major drawback of the proposed identification scheme is the potential increase in complexity of the resulting parameter sets through the addition of new half-spaces at every set update, which increases the computational complexity of the proposed adaptive MPSC scheme.
This issue can be addressed by fixing the shape of the parameter polytopes, e.g., by fixing the half-spaces, i.e., $\Theta_k=\{\theta\in\mathbb{R}^p|\;H_{\theta}\theta\leq h_{\theta_k}\}$, and only recomputing the right-hand side of the polytope inequality $h_{\theta_k}$ through the solution of a linear program (LP), as is shown in \cite{lorenzen2019robust}.
%The vector $h_{\theta_k}$, which provides the smallest bounding polytope of the intersection $\Theta_{k-1}\cap\Delta_k$ can then be computed through the following linear program (LP) provided in \cite{lorenzen2019robust}
%\begin{gather}
%\label{eq:setincLP}
%\begin{align}
% \min_{h\in\mathbb{R}^{n_\theta}, \Lambda\in\mathbb{R}^{n_\theta\times p_w}} & \mathbf{1}^Th \nonumber \\
% \text{s.t. } & \Lambda\begin{bmatrix} H_\theta \\ -H_wD_k\end{bmatrix}=H_\theta, \\
% 		& \Lambda\begin{bmatrix} h_{\theta_k} \\ h_w+H_wd_{k+1}\end{bmatrix}\leq h,  \nonumber\\
%		& \Lambda\geq0, \nonumber
% \end{align}
%\end{gather}
%with $p_w=n_\theta+n_w$. 
%The solution of this optimisation problem provides a set $\Theta_k\supseteq\Theta_{k-1}\cap\Delta_k$ which fulfils the properties in Assumption~\ref{ass:parameter} and has a constant complexity.
To further increase the computational update efficiency of the parameter identification as well as the efficiency of the proposed adaptive MPSC scheme, the set of parameters can be restricted to hypercubes with centre $\bar{\theta}_k\in\mathbb{R}^p$ and size $\eta_k\geq0$, i.e., $\Theta_k=\{\bar{\theta}_k\}\oplus\eta_k\mathbb{B}_p$ as described in \cite{kohler2019linear} and \cite{kohler2020nonlinear}.
This parametrisation results in $2p$ LPs to find the minimal and maximal values of $\theta$ in $\Theta_{k-1}\cap\Delta_k$ in every parameter dimension, thereby computing the smallest bounding hypercube of the intersection.

By using a tube controller $\kappa(x,v_{l|k})=Kx+v_{l|k}$ and a homothetic tube formulation for the sets
$\mathbb{X}_{l|k}=\{z_{l|k}\}\oplus\alpha_{l|k}\mathbb{X}_0$, with $\mathbb{X}_0=\{x\in\mathbb{R}^n|H_xx\leq\mathbf{1}\}$, $H_x\in\mathbb{R}^{n_x\times n}$ and $\alpha_{l|k}\geq0$, the optimisation problem \eqref{eq:optimisation} can be formulated as a quadratic program if $\mathbb{S}_f$ is also a polytope, with optimisation variables $v_{l|k}, z_{l|k}$ and $\alpha_{l|k}$. In \cite{kohler2019linear}, $z_{l|k}$ are computed according to dynamics \eqref{eq:lindynamics} with the center of the hypercube $\Theta_k$ as parameters, allowing for a more computationally efficient reformulation. 
%A more general parametrisation is used in \cite{lu2019robust}, where flexible tubes $\mathbb{X}_{l|k}=\{x\in\mathbb{R}^n|\; H_x x\leq\alpha_{l|k}\}$ are used with the vectors $\alpha_{l|k}$ and the additive terms for the input $v_{l|k}$ as the optimisation variables. 
Note that in the homothetic tube formulations, the terminal constraint \eqref{eq:terminalconstr} is given by $(z_{N|k},\alpha_{N|k})\in\mathbb{X}_f$, where the terminal set $\mathbb{X}_f$ is a set of translations and dilations $(z,\alpha)$. which can be iteratively enlarged through previously solved instances of \eqref{eq:optimisation} as shown in the next section.

\subsection{Iterative Terminal Set Enlargement for Recursive Feasibility}\label{sec:ittermrecfeas}
A recursively feasible MPSC problem facilitates the implementation of Algorithm~1, however it introduces the design task of finding a possibly large terminal set in order to reduce conservative safety interventions. We thus propose a mechanism to iteratively enlarge a terminal set for the homothetic tube formulation as described in Section~\ref{sec:efficientAMPSC} such that recursive feasibility is guaranteed when employing this new terminal set.
As discussed in Section~\ref{sec:recursiveMPSC}, we select the terminal set $\mathbb{X}_f$ such that for every translation and dilation $(z,\alpha)\in\mathbb{X}_f$, a translation and dilation in the terminal set exists at the next time step, ensuring recursive feasibility of \eqref{eq:optimisation}. This assumption on the terminal set is common in the robust adaptive MPC literature and is stated explicitly in \cite{lorenzen2019robust}, \cite{kohler2019linear} and \cite{kohler2020nonlinear} and used implicitly in \cite{lu2019robust} in the online optimisation problem.
\begin{assumpt} \label{ass:homothetictubeterm}
Let $\mathbb{W}$, $\bar{\Theta}$ be polytopic and let $\mathbb{X}_{l|k}$ be of the form $\{z_{l|k}\}\oplus\alpha_{l|k}\mathbb{X}_0$ with polytopic $\mathbb{X}_0=\{H_xx\leq\mathbf{1}\}$ and $H_x\in\mathbb{R}^{n_x\times n}$.
Consider a non-empty terminal set $\mathbb{X}_f=\{(z,\alpha)| H_Tz+h_T\alpha\leq\mathbf{1}\}$, with $H_T\in\mathbb{R}^{n_T\times n}$ and $h_T\in\mathbb{R}^{n_T}$, and a tube control law $\kappa(x,v)=Kx+v$ in \eqref{eq:optimisation}. For every $(z,\alpha)\in\mathbb{X}_f$ , there exists a $v$ and $(z^+,\alpha^+)\in\mathbb{X}_f$, such that for all $\theta\in\bar{\Theta}$ and $x\in\{z\}\oplus\alpha\mathbb{X}_0$, it holds that $(x,\kappa(x,v))\in\mathbb{Z}$ and $A(\theta)(\{z\}\oplus\alpha\mathbb{X}_0)\oplus B(\theta)K(\{z\}\oplus\alpha\mathbb{X}_0)\oplus\{ B(\theta)v\}\oplus\mathbb{W}\subseteq\{z^+\}\oplus\alpha^+\mathbb{X}_0.$
% The terminal set $\mathbb{X}_f=\{(z,\alpha)| H_Tz+h_T\alpha\leq\mathbf{1}\}$ is such that 
%$\exists (z^+,\alpha^+)$ s.t.  $A(\theta)(\{z\}\oplus\alpha\mathbb{X}_0)\oplus B(\theta)K(\{z\}\oplus\alpha\mathbb{X}_0)\oplus\{ B(\theta)v\}\oplus\mathbb{W}\subseteq\{z^+\}\oplus\alpha^+\mathbb{X}_0)$.
\end{assumpt}

By using solved instances of \eqref{eq:optimisation} with optimal $(z_{l|k}^*,\alpha_{l|k}^*)$, the terminal set can then be enlarged, such that a recursively feasible optimisation problem is recovered. 

\begin{thrm}
 Let Assumptions~\ref{ass:bounded},\ref{ass:parameter} and \ref{ass:homothetictubeterm} hold and \eqref{eq:optimisation} be convex, then the set
\begin{equation}\label{eq:homotheticsetenlargement}
	\mathbb{X}_f^{\mathcal{M}(\bar{k})}=\textup{co}\left(\{(z^*_{l|k},\alpha^*_{l|k})\}_{l\in\mathcal{I}_{[0,N]},k\in\mathcal{M}(\bar{k})},\mathbb{X}_f\right)
%(\bigcup_{l\in\mathcal{I}_{[0,N]}}\bigcup_{k\in\mathcal{M}(\bar{k})}(z^*_{l|k},\alpha^*_{l|k})\cup\right)
\end{equation}
satisfies Assumption~\ref{ass:homothetictubeterm} with respect to $\bar{\Theta}=\Theta_{\bar{k}}$.% and $\textup{Proj}^n\mathbb{X}_f^{M(\bar{k})}$ is a safe set with a backup controller given as the solution $u_{0|k}^*$ of \eqref{eq:optimisation}.
\end{thrm}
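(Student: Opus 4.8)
The claim is that the enlarged terminal set $\mathbb{X}_f^{\mathcal{M}(\bar k)}$, obtained as the convex hull of $\mathbb{X}_f$ together with all previously computed tube parameters $(z^*_{l|k},\alpha^*_{l|k})$, again satisfies Assumption~\ref{ass:homothetictubeterm} with respect to $\bar\Theta=\Theta_{\bar k}$. The structure of the proof is to verify the two defining properties — constraint satisfaction $(x,\kappa(x,v))\in\mathbb{Z}$ on the associated tube cross-section, and the one-step robust inclusion into the terminal set — for an arbitrary extreme point of $\mathbb{X}_f^{\mathcal{M}(\bar k)}$, and then lift both properties to arbitrary convex combinations by exploiting linearity of all the maps involved. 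Since a polytope is the convex hull of its vertices and every vertex of $\mathbb{X}_f^{\mathcal{M}(\bar k)}$ is either a vertex of $\mathbb{X}_f$ or one of the points $(z^*_{l|k},\alpha^*_{l|k})$ with $k\in\mathcal{M}(\bar k)$, it suffices to treat these two cases.

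First I would handle the extreme points. For a vertex of $\mathbb{X}_f$, Assumption~\ref{ass:homothetictubeterm} (as stated for $\bar\Theta$, and using $\Theta_{\bar k}\subseteq\Theta_0$) already provides a suitable $v$ and a successor $(z^+,\alpha^+)\in\mathbb{X}_f\subseteq\mathbb{X}_f^{\mathcal{M}(\bar k)}$. For a point $(z^*_{l|k},\alpha^*_{l|k})$ with $l\in\mathcal{I}_{[0,N-1]}$ and $k\in\mathcal{M}(\bar k)$: the optimal solution of \eqref{eq:optimisation} at time $k$ gives $v^*_{l|k}$ with $(x,Kx+v^*_{l|k})\in\mathbb{Z}$ for all $x\in\{z^*_{l|k}\}\oplus\alpha^*_{l|k}\mathbb{X}_0$ by \eqref{eq:stateconstr}, and by \eqref{eq:tubeincconstr} the robust one-step image is contained in $\{z^*_{l+1|k}\}\oplus\alpha^*_{l+1|k}\mathbb{X}_0$ for all $\theta\in\Theta_k\supseteq\Theta_{\bar k}$ — here I use $\mathcal{M}(\bar k)\subseteq\mathcal{I}_{[0,\bar k]}$ and Assumption~\ref{ass:parameter} so that $\Theta_{\bar k}\subseteq\Theta_k$. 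Hence $(z^*_{l+1|k},\alpha^*_{l+1|k})$, which also lies in $\mathbb{X}_f^{\mathcal{M}(\bar k)}$, is a valid successor. For $l=N$, the terminal constraint \eqref{eq:terminalconstr} gives $(z^*_{N|k},\alpha^*_{N|k})\in\mathbb{X}_f$, so this vertex is already covered by the $\mathbb{X}_f$ case. Thus every extreme point of $\mathbb{X}_f^{\mathcal{M}(\bar k)}$ admits a feasible $v$ and a successor inside $\mathbb{X}_f^{\mathcal{M}(\bar k)}$.

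Next I would lift to convex combinations. Take $(z,\alpha)=\sum_j \lambda_j (z_j,\alpha_j)$, a convex combination of extreme points with associated $v_j$ and successors $(z_j^+,\alpha_j^+)$, and set $v=\sum_j\lambda_j v_j$, $(z^+,\alpha^+)=\sum_j\lambda_j (z_j^+,\alpha_j^+)\in\mathbb{X}_f^{\mathcal{M}(\bar k)}$. The key geometric fact is that the tube cross-section is linear in the parameters, i.e. $\{z\}\oplus\alpha\mathbb{X}_0 \subseteq \bigoplus_j \lambda_j\big(\{z_j\}\oplus\alpha_j\mathbb{X}_0\big)$ — more precisely, any $x\in\{z\}\oplus\alpha\mathbb{X}_0$ can be written as $\sum_j\lambda_j x_j$ with $x_j\in\{z_j\}\oplus\alpha_j\mathbb{X}_0$ (decompose the $\mathbb{X}_0$-component, noting $\mathbb{X}_0$ convex and $0\in\mathbb{X}_0$). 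Then constraint satisfaction follows since $(x_j,Kx_j+v_j)\in\mathbb{Z}$ and $\mathbb{Z}$ is a polytope, hence convex: $(x,Kx+v)=\sum_j\lambda_j(x_j,Kx_j+v_j)\in\mathbb{Z}$. For the inclusion, fix $\theta\in\Theta_{\bar k}$; linearity of $A(\theta)$, $B(\theta)K$ and convexity of $\mathbb{W}$ (polytope) give, for the map $x\mapsto A(\theta)x+B(\theta)(Kx+v)+w$, that $A(\theta)x+B(\theta)(Kx+v)+w=\sum_j\lambda_j\big(A(\theta)x_j+B(\theta)(Kx_j+v_j)+w\big)$ with each summand in $\{z_j^+\}\oplus\alpha_j^+\mathbb{X}_0$, and the same decomposition argument in reverse shows the convex combination lands in $\{z^+\}\oplus\alpha^+\mathbb{X}_0$. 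Therefore $\mathbb{X}_f^{\mathcal{M}(\bar k)}$ satisfies Assumption~\ref{ass:homothetictubeterm}.

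The main obstacle is the lifting step, specifically making rigorous that the homothetic cross-section map $(z,\alpha)\mapsto\{z\}\oplus\alpha\mathbb{X}_0$ interacts correctly with convex combinations — both directions: that a point of the combined cross-section decomposes into points of the individual cross-sections (needed so per-vertex constraint/inclusion facts apply pointwise), and that the resulting convex combination of images is contained in the combined successor cross-section (rather than merely in its convex hull). This relies on $\mathbb{X}_0$ being convex with $0$ in its interior, which is guaranteed by $\mathbb{X}_0=\{H_x x\le\mathbf 1\}$ being a bounded polytope containing the origin; everything else — distributing $A(\theta)$, $B(\theta)K$, $B(\theta)v$ over Minkowski sums and convex combinations, and using convexity of $\mathbb{Z}$ and $\mathbb{W}$ — is routine linear-algebraic bookkeeping.
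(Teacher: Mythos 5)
Your proposal is correct and follows essentially the same route as the paper's proof: verify Assumption~\ref{ass:homothetictubeterm} at the generating points (vertices of $\mathbb{X}_f$ via the assumption itself, and the optimal tube parameters $(z^*_{l|k},\alpha^*_{l|k})$ via constraints \eqref{eq:tubeincconstr}--\eqref{eq:terminalconstr} together with $\Theta_{\bar{k}}\subseteq\Theta_k$), then lift to arbitrary convex combinations by taking the matching convex combinations of $v$ and of the successors and using the identity $\bigoplus_j\lambda_j(\mathbb{A}\oplus\mathbb{B})=(\sum_j\lambda_j)\mathbb{A}\oplus(\sum_j\lambda_j)\mathbb{B}$ for convex sets. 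The only cosmetic difference is that the paper folds the $\mathbb{X}_f$-vertices into the same index set as the solved instances by constructing synthetic solution chains, whereas you treat them as a separate case.
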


\begin{proof}
We denote the $n_X$ vertices of $\mathbb{X}_f$ as $\{(z^*_{0|k},\alpha^*_{0|k})\}_{k\in\mathcal{I}_ {[-n_X,-1]}}$ and construct corresponding tuples $(z^*_{l|k},\alpha^*_{l|k}){\in}\mathbb{X}_f$ for $l{\in}\mathcal{I}_{[1,N+1]}$ that satisfy Assumption~\ref{ass:homothetictubeterm} for consecutive pairs $l$ and $l{+}1$. We then denote as $\mathcal{N}{=}\mathcal{M}(\bar{k})\!\cup\mathcal{I}_{[-n_X,-1]}$ the set of solved time steps $\mathcal{M}(\bar{k})$, together with the $n_X$ constructed solutions for each vertex of $\mathbb{X}_f$. Through the solutions of \eqref{eq:optimisation} and the constructed solutions we have for all $l\in\mathcal{I}_{[0,N]}$, $k\in\mathcal{N}$ and $\theta\in\Theta_{\bar{k}}\subseteq\Theta_k$ , that it holds that
\vspace{-0.1cm}
\begin{equation*}
\vspace{-0.1cm}
%&A_{cl}(\theta)\mathbb{X}_{l|k}^*\oplus \{B(\theta)v_{l|k}^*\}\oplus\mathbb{W}\subseteq \mathbb{X}^*_{l+1|k}\nonumber\\
%\Leftrightarrow
A_{\!cl}(\theta)(\!\{z_{l|k}^*\}{\oplus}\alpha^*_{l|k}\mathbb{X}_0\!){\oplus}\{\!B(\theta)v_{l|k}^*\!\}{\oplus}\mathbb{W}{\subseteq}\{\!z_{l+1|k}^*\!\}{\oplus}\alpha_{l+1|k}^*\mathbb{X}_0,
\end{equation*}
where we define $A_{cl}(\theta){=}A(\theta){+}B(\theta)K$ and use the fact that $(z^*_{N+1|k},\alpha^*_{N+1|k}){\in}\mathbb{X}_f$ exists according to Assumption~\ref{ass:homothetictubeterm} as $(z^*_{N|k},\alpha^*_{N|k}){\in}\mathbb{X}_f$. 
For any $(z,\alpha)\in\mathbb{X}_f^{\mathcal{M}(\bar{k})}$, we can write $(z,\alpha)=\sum_{l\in\mathcal{I}_{[0,N]}}\sum_{k\in\mathcal{N}}\lambda_{l|k}(z_{l|k}^*,\alpha_{l|k}^*)$ due to the convex hull, where it holds that $\sum_{l\in\mathcal{I}_{[0,N]}}\sum_{k\in\mathcal{N}}\lambda_{l|k}{=}1$, $\lambda_{l|k}{\geq}0$.
We then choose $v{=}\sum_{l\in\mathcal{I}_{[0,N]}}\sum_{k\in\mathcal{N}}\lambda_{l|k}v^*_{l|k}$ where $v^*_{l|k}$ corresponds to the input solution of \eqref{eq:optimisation} at time step $l|k$, and the corresponding $(z^+,\alpha^+){=}\sum_{l\in\mathcal{I}_{[0,N]}}\sum_{k\in\mathcal{N}}\lambda_{l|k}(z^*_{l+1|k},\alpha^*_{l+1|k})$. 
It then follows that for all $\theta\in\Theta_{\bar{k}}$,
\begin{gather}
\begin{align}
&A_{cl}(\theta)\left(\{z\}\oplus\alpha\mathbb{X}_0\right)\oplus\{B(\theta)v\} \oplus\mathbb{W}\nonumber\\
=&A_{cl}(\theta)\Big(\{{\sum_{l\in\mathcal{I}_{[0,N]}}}{\sum_{k\in\mathcal{N}}}\lambda_{l|k}z_{l|k}^*\}
{\oplus}\big({\sum_{l\in\mathcal{I}_{[0,N]}}}{\sum_{k\in\mathcal{N}}}\lambda_{l|k}\alpha_{l|k}^*\big)\mathbb{X}_0\Big) \nonumber\\
&{\oplus}\{B(\theta){\sum_{l\in\mathcal{I}_{[0,N]}}}{\sum_{k\in\mathcal{N}}}\lambda_{l|k}v_{l|k}^*\}
{\oplus}\big({\sum_{l\in\mathcal{I}_{[0,N]}}}{\sum_{k\in\mathcal{N}}}\lambda_{l|k}\big)\mathbb{W}\nonumber\\
%=&{\sum_{l\in\mathcal{I}_{[0,N]}}}{\sum_{k\in\mathcal{N}}}\lambda_{l|k}A_{cl}(\theta)\{z_{l|k}^*\}
%{\oplus}\big({\sum_{l\in\mathcal{I}_{[0,N]}}}{\sum_{k\in\mathcal{N}}}\lambda_{l|k}\alpha_{l|k}^*A_{cl}(\theta)\big)\mathbb{X}_0\nonumber\\
%&{\oplus}{\sum_{l\in\mathcal{I}_{[0,N]}}}{\sum_{k\in\mathcal{N}}}\lambda_{l|k}\{B(\theta)v_{l|k}^*\}
%{\oplus}\big({\sum_{l\in\mathcal{I}_{[0,N]}}}{\sum_{k\in\mathcal{N}}}\lambda_{l|k}\big)\mathbb{W}\nonumber\\
=&{\bigoplus_{l\in\mathcal{I}_{[0,N]}}}{\bigoplus_{k\in\mathcal{N}}}\lambda_{l|k}\big(A_{cl}(\theta)(\{z_{l|k}^*\}{\oplus}\alpha_{l|k}^*\mathbb{X}_0){\oplus}\{B(\theta)v_{l|k}^*\}{\oplus}\mathbb{W}\big) \nonumber \\
\subseteq&{\bigoplus_{l\in\mathcal{I}_{[0,N]}}}{\bigoplus_{k\in\mathcal{N}}}{\lambda_{l|k}}(\{z_{l+1|k}^*\}\oplus\alpha_{l+1|k}^*\mathbb{X}_0)\nonumber\\
=&\{z^+\}\oplus\alpha^+\mathbb{X}_0,\nonumber
\end{align}
\end{gather}
where step 2 is shown in detail for convex sets $\mathbb{A}$ and $\mathbb{B}$: \\ %$\alpha\mathbb{A}\oplus\alpha\mathbb{B}=\alpha(\mathbb{A}\oplus\mathbb{B})$ and 
$(\sum_i\lambda_i)\mathbb{A}\oplus(\sum_i\lambda_i)\mathbb{B}=\{\sum_i\lambda_ia+\sum_i\lambda_ib|a\in\mathbb{A}, b\in\mathbb{B}\} \\
=\{\sum_i\tilde{a}_i+\sum_i\tilde{b}_i|\tilde{a}_i\in\lambda_i\mathbb{A},\tilde{b}_i\in\lambda_i\mathbb{B}\}=\bigoplus_i(\lambda_i\mathbb{A}\oplus\lambda_i\mathbb{B})\\
=\bigoplus_i\lambda_i(\mathbb{A}\oplus\mathbb{B})$.
As $(z,a)\in\mathbb{X}_f^{\mathcal{M}(\bar{k})}$, it follows that the tuple $(z^+,\alpha^+)\in\mathbb{X}_f^{\mathcal{M}(\bar{k})}$ from a convex combination of the tuples $(z^*_{l+1|k},\alpha^*_{l+1|k}){\in}\mathbb{X}_f^{\mathcal{M}(\bar{k})}$. Similarly through the convex combination of $v$, the combined state and input constraints are guaranteed to hold.% As for every tuple $(x,0)\in\mathbb{X}_f^{M(\bar{k})}$, the projection onto the state space is a feasible region for \eqref{eq:optimisation} and thus a safe set according to Definition~\ref{def:safeset}.
\end{proof}

\begin{remark}
Given a representation of the set $\mathbb{X}_0{=}\textup{co}(x^1, x^2, \dots, x^{n_{X_0}})$ with $n_{X_0}$ vertices, the terminal set enlargement in \eqref{eq:homotheticsetenlargement} can be further improved with the vertices of the previously computed $\mathbb{X}^*_{l|k}$ by using $\textup{co}\left(\mathbb{X}_f^{\mathcal{M}(\bar{k})}, \{(z^*_{l|k}+\alpha^*_{l|k}x^j, 0)\}_{j\in\mathcal{I}_{[0,n_{X_0}]}, l\in\mathcal{I}_{[0,N]},k\in\mathcal{M}(\bar{k})}\right)$ as for all $j{\in}\mathcal{I}_{[0,n_{X_0}]}$, $z^*_{l|k}{+}\alpha^*_{l|k}x^j{\in}\mathbb{X}^*_{l|k}$, which implies that $\forall\theta\in\Theta_{\bar{k}}$, $A_{cl}(\theta)(z^*_{l|k}{+}\alpha^*_{l|k}x^j){\oplus}\{B(\theta)v^*_{l|k}\}{\oplus}\mathbb{W}{\subseteq}\mathbb{X}^*_{l+1|k}$.
\end{remark}
%%%%%%%%%%%%%%%%%%%%%%%%%%%%%%%%%%%%%%%%%%%%%%%%%%%%%%%%%%%%%%%%%%%%%%%%%%%%%%%%
\section{NUMERICAL EXAMPLE}\label{sec:numericalexample}
We consider a chain of $n_{\textup{MSD}}$ mass elements connected by $n_{\textup{MSD}}-1$ springs and dampers. The discrete-time dynamics of the mass element $i$ are given by
\begin{gather}
\begin{align}
p_{k+1,i}&{=}p_{k,i}+T_sv_{k,i} \nonumber\\
v_{k+1,i}&{=}v_{k,i}{-}T_sc_{i-1,i}(p_{k,i}{-}p_{k,i-1}){-}T_sd_{i-1,i}(v_{k,i}{-}v_{k,i-1}) \nonumber \\
&{+}T_sc_{i,i+1}(p_{k,i+1}{-}p_{k,i}){-}T_sd_{i,i+1}(v_{k,i+1}{-}v_{k,i}){+}u_{k,i} \nonumber
\end{align}
\end{gather}
with the position of element $i$ at time step $k$ denoted by $p_{k,i}$ and the element velocity $v_{k,i}$, sampling time $T_s=0.2s$, spring and damping constants $c_{i,i+1}$ and $d_{i,i+1}$ of the springs and dampers connecting elements $i$ and $i+1$.
All damping coefficients $d_{i,i+1}=0.1$, with $d_{0,1}$ and $d_{n_{\textup{MSD}},n_{\textup{MSD}}+1}$ and the corresponding spring constants being 0. The remaining spring constants are randomly drawn between $[0.05, 0.25]$ and are considered as uncertain parameters $\theta$ with the initial set of parameters $\Theta_0=[0.05, 0.25]^{n_{\textup{MSD}}-1}$, such that $\theta^*\in\Theta_0$ and an additive disturbance on the positions and velocities with $\abs{w}\leq1\textup{e}-3$ is used. The dynamics can thus be defined as $x_{k+1}=A(\theta)x_k+Bu_k+w_k$ and Assumption~\ref{ass:bounded} is fulfilled. 

\begin{figure*}[thpb]
\vspace{-0.2cm}
   %\centering
   \hspace{-2cm}\includegraphics[width=1.2\textwidth]{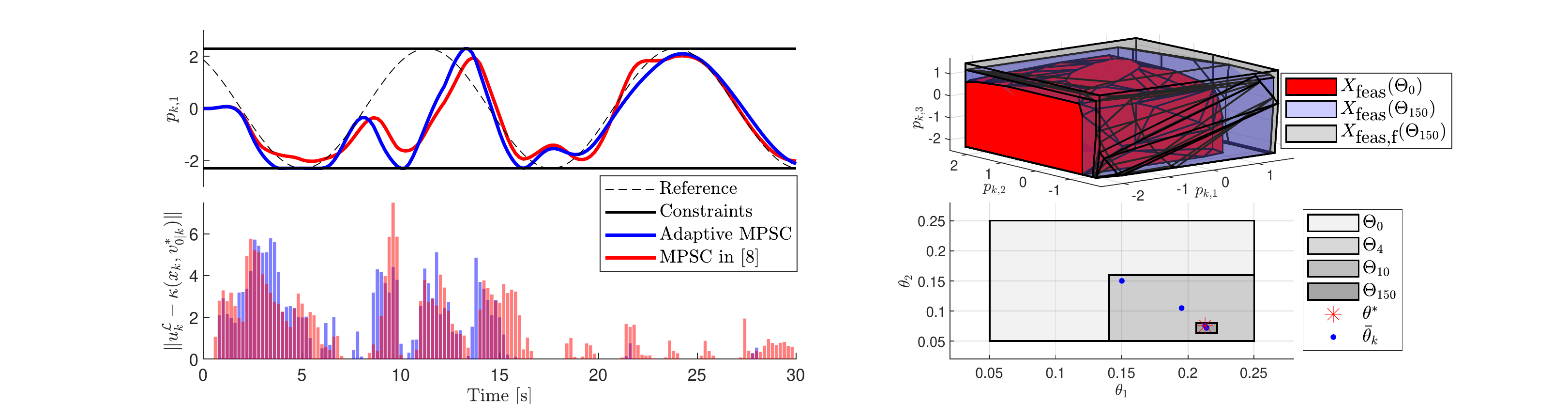}
\vspace{-0.7cm}
   \caption{Simulation of 3 mass-spring-damper elements using the adaptive MPSC scheme. The approach is compared against \cite{wabersich2018linear} (left) demonstrating less frequent safety filter interventions while ensuring constraint satisfaction and successfully identifying the unknown parameters (bottom right). Additionally, the initial and final feasible set as well as the feasible set after terminal set enlargement $\mathbb{X}_{\textup{feas}}(\Theta_0)$, $\mathbb{X}_{\textup{feas}}(\Theta_{150})$ and $\mathbb{X}_{\textup{feas,f}}(\Theta_{150})$, respectively, are shown for the element positions for a fixed velocity (top right).}
\vspace{-0.2cm}
   \label{fig:sim}
\end{figure*}
%
%\begin{figure}[thpb]
%   \centering
%   \includegraphics[width=0.5\textwidth]{Figures/3MSD_Feasibility_3.pdf}
%   \caption{Feasible set of optimisation problem \eqref{eq:optimisation} for 3 mass-spring-damper elements at the start and end of a simulation using the adaptive MPSC. The initial, final and terminal set enlargement feasible set $\mathbb{X}_{\textup{feas}}(\Theta_0)$, $\mathbb{X}_{\textup{feas}}(\Theta_{150})$ and $\mathbb{X}_{\textup{feas,f}}(\Theta_{150})$, respectively, are shown for the position states of the elements for respective velocities of $2$, $-2$ and $2$.}
%\vspace{-0.4cm}
%   \label{fig:feas}
%\end{figure}
The system is simulated for $30$s with $3$ and $8$ elements from the origin using the adaptive MPSC scheme with the constraint reformulation in \cite{kohler2019linear} with decoupled state and input constraints $\mathbb{X}=[-2.3,2.3]^{2n_{\textup{MSD}}}$ and $\mathbb{U}=[-3.5,3.5]^{n_{\textup{MSD}}}$. The problem is solved using YALMIP \cite{lofberg2004yalmip} and MOSEK \cite{anderson2000mosek} with an average computation time of $6$ms and $340$ms for 3 and 8 elements respectively. A PGSD controller \cite{kolter2009PGSD} is used with random initial control parameters which are trained during the simulation. 
%$u_{k,i}^\mathscr{L}=3.5\sin(2\pi k+\frac{i-1}{n_{\textup{MSD}}}\pi)$ 
The terminal constraints used are $z_{N|k}=0$ and $0\leq\alpha_{N|k}\leq1$. The simulation results with 3 elements and a comparison to the method in \cite{wabersich2018linear} where the parametric uncertainty is included by enlarging the disturbance set can be seen in Figure \ref{fig:sim}. The adaptive MPSC scheme successfully prevents constraint violations of the system and interferes less conservatively than \cite{wabersich2018linear}. The enlargement of the implicitly defined safe set of the MPSC through parameter adaptation and through an additional terminal set enlargement using all available data after the simulation is finished is also shown in Figure \ref{fig:sim}. A section of the feasible sets $\mathbb{X}_{\textup{feas}}(\Theta_0)$, $\mathbb{X}_{\textup{feas}}(\Theta_{150})$ and with the enlarged terminal set, $\mathbb{X}_{\textup{feas,f}}(\Theta_{150})$, for a fixed velocity is shown, which is computed through gridding of the state space. In order to compute the volume of the implicitly defined safe sets in $2n_{\textup{MSD}}$ dimensions, Monte Carlo Integration is used with $10^5$ randomly drawn samples. The results and a comparison with the feasible set $\mathbb{X}_{\textup{feas,\cite{wabersich2018linear}}}$ of \cite{wabersich2018linear} and the total volume within the constraints are shown in Table 1, where a $21\%$ increase in volume is observed for the case of $3$ mass-spring-damper elements after the parameter estimation and a total increase of $28\%$ with the terminal set enlargement. For $8$ elements, an increase of $100\%$ and $120\%$, respectively, is observed.

%%%%%%%%%%%%%%%%%%%%%%%%%%%%%%%%%%%%%%%%%%%%%%%%%%%%%%%%%%%%%%%%%%%%%%%%%%%%%%%%
\section{CONCLUSION}
An adaptive Model Predictive Safety Certification scheme was proposed, which ensures safety of dynamical systems controlled by any learning-based controller. This modular framework uses set-membership estimation in order to sequentially improve the set in which uncertain parameters can possibly lie. The parameter estimation allows to enlarge the feasible set of the MPSC, and thereby the safe set of operation, in an online manner with recursive feasibility guarantees. We provide a possible enlargement of the terminal safe set used in the MPSC optimisation problem using previously solved instances, in order to further increase the feasible set of the MPSC and present a design method allowing for a computationally efficient optimisation problem. The adaptive MPSC scheme was applied to a chain of mass-spring-damper elements, which showed a significant increase in the implicit safe set volume through the parameter estimation and the terminal safe set enlargement and interfered less often than the nominal method in \cite{wabersich2018linear}.

\begin{table}
\vspace{0.22cm}
\setlength{\tabcolsep}{4.7pt}
\caption{Volume of the feasible set of the adaptive MPSC Optimisation Problem through Monte Carlo Integration}
\label{table:ex}
\vspace{-0.4cm}
\begin{center}
\begin{tabular}{|c|c|c|c|c|c|}
\hline
 &\!$\mathbb{X}_{\textup{feas}}(\Theta_0)$\!&\!$\mathbb{X}_{\textup{feas}}(\Theta_{150})$\!&\!$\mathbb{X}_{\textup{feas,f}}(\Theta_{150})$\!&\!$\mathbb{X}_{\textup{feas,\cite{wabersich2018linear}}}$\!&\!Constraint\!\\
\!\#\! MSD\!& Volume & Volume & Volume & Volume & Volume\\
\hline
$3$ & $5.86\textup{e}3$ & $7.10\textup{e}3$ & $7.46\textup{e}3$ & $3.62\textup{e}3$ & $9.47\textup{e}3$\\
\hline
$8$ & $9.64\textup{e}9$ & $1.93\textup{e}10$ & 2.12\textup{e}10 & $\diagup$ \footnotemark & $4.02\textup{e}10$\\
\hline
\end{tabular}
\end{center}
\end{table}	

\footnotetext{The RPI set for \cite{wabersich2018linear} could not be computed due to the complexity of the disturbance set resulting from the use of nominal linear dynamics matrices.}

\addtolength{\textheight}{-11cm}   % This command serves to balance the column lengths
                                  % on the last page of the document manually. It shortens
                                  % the textheight of the last page by a suitable amount.
                                  % This command does not take effect until the next page
                                  % so it should come on the page before the last. Make
                                  % sure that you do not shorten the textheight too much.
\newpage

\begin{appendix}

\subsection{Nonlinear Extension}\label{sec:nonlinearextension}
The adaptive MPSC scheme provided can be extended to nonlinear dynamics 
\begin{equation}\label{eq:NLdynamics}
	x_{k+1}=f(x_k,u_k,w_k,\theta)
\end{equation}
subject to a compact nonlinear constraint set $(x_k,u_k)\in\mathbb{Z}=\{(x,u)\in\mathbb{R}^n\times\mathbb{R}^m|\; H(x,u)\leq\mathbf{1}\}$, similar to \cite[Section~2]{kohler2020nonlinear}.
In order to robustly provide safety guarantees for the adaptive MPSC scheme, we make the following assumption.
\begin{assumpt}\label{ass:NLmap}
There exists a map $\Phi:2^{\mathbb{R}^n}\times\mathbb{R}^m\times2^{\mathbb{R}^p}\rightarrow 2^{\mathbb{R}^n}$, such that for any $(\mathcal{X},u)\subseteq\mathbb{Z}, \Theta\subseteq\Theta_0$, we have
\begin{equation}
f(x,u,w,\theta)\in\Phi(\mathcal{X},u,\Theta)
\end{equation}
for any $x\in\mathcal{X}, w\in\mathbb{W}$ and $\theta\in\Theta$. Furthermore, the map $\Phi$ satisfies the following monotonicity property
\begin{equation}
	\Phi(\mathcal{X}',u,\Theta')\subseteq\Phi(\mathcal{X},u,\Theta)
\end{equation}
for any $(\mathcal{X}',u)\subseteq(\mathcal{X},u)\subseteq\mathbb{Z}$ and any $\Theta'\subseteq\Theta\subseteq\Theta_0$.
\end{assumpt}
The adaptive MPSC scheme then consists of Algorithm~1 where the constraint \eqref{eq:tubeincconstr} is replaced by
\begin{equation} \label{eq:optimisationNL}
%\min_{u_{\cdot |k}, \mathbb{X}_{\cdot|k}} & \Vert u_\mathscr{L}-u_{0|k}\Vert \\
%\textup{s.t. } & \forall l\in\mathcal{I}_{[0,N-1]} \nonumber \\
%& x_k \in\mathbb{X}_{0|k},  \label{eq:initialconstrNL} \\
\Phi(\mathbb{X}_{l|k}, \kappa(x,v_{l|k}), \Theta_k) \subseteq \mathbb{X}_{l+1|k}\; \forall x\in\mathbb{X}_{l|k}.%\label{eq:tubeincconstrNL} 
%&(x, u_{l|k}(x))\in\mathbb{Z}, \quad \forall x \in\mathbb{X}_{l|k}, \label{eq:stateconstrNL} \\
%&\mathbb{X}_{N|k}\subseteq \mathbb{S}_f \label{eq:terminalconstrNL}
\end{equation}

Constraint \eqref{eq:optimisationNL} ensures that the states predicted with respect to the inputs $\kappa(x,v_{l|k})$ are contained within the state tube constructed from the sets $\mathbb{X}_{l|k}$ through the use of the map $\Phi$ from Assumption~\ref{ass:NLmap}. Similarly to the linear case, constraint satisfaction is ensured through constraint \eqref{eq:stateconstr} and thus the solution of the nonlinear optimisation problem $\kappa(x,v^*_{l|k})$ is guaranteed to lead the system to the terminal safe set $\mathbb{S}_f$. The monotonicity property in Assumption~\ref{ass:NLmap} guarantees that through a parameter set update $\Theta_{k+1}\subseteq\Theta_{k}$, the inputs $\kappa(x,v^*_{l|k})$ still ensure constraint satisfaction as it holds that $\Phi(\mathbb{X}_{l|k},\kappa(x,v^*_{l|k}),\Theta_{k+1})\subseteq\Phi(\mathbb{X}_{l|k},\kappa(x,v^*_{l|k}),\Theta_{k})\subseteq\mathbb{X}^*_{l+1|k}$.
\begin{thrm}
Let Assumptions~\ref{ass:parameter} and \ref{ass:NLmap} hold. The control law $\pi_{\textup{MPSC}}(u_{\mathscr{L}},x_k,\Theta_k,k)$ resulting from Algorithm~1 with the constraint \eqref{eq:optimisationNL} is a safe backup controller and $\mathbb{X}_{\textup{feas}}(\Theta_k)\cup\mathbb{S}_f$ the corresponding safe set according to Definition~\ref{def:safeset}. Additionally, it holds that $\mathbb{X}_{\textup{feas}}(\Theta_{0})\cup\mathbb{S}_f\subseteq\mathbb{X}_{\textup{feas}}(\Theta_{k-1})\cup\mathbb{S}_f\subseteq\mathbb{X}_{\textup{feas}}(\Theta_{k})\cup\mathbb{S}_f$ for all time steps $k>0$.
\end{thrm}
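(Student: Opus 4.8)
The plan is to mirror the proof of Theorem~\ref{thrm:AMPSC}, replacing every use of the linear tube inclusion \eqref{eq:tubeincconstr} by the monotone set-valued map $\Phi$ of Assumption~\ref{ass:NLmap}; note first that Definition~\ref{def:safeset} carries over verbatim to \eqref{eq:NLdynamics} by reading $f(x_k,u_k,w_k,\theta)$ in place of $A(\theta)x_k+B(\theta)u_k$ and the admissibility test on $\pi_{\mathbb{S}_f}$ as $\Phi(\{x_k\},u_k^{\mathscr{L}},\Theta_k)\subseteq\mathbb{S}_f$, and that $\mathbb{S}_f$ is assumed to be a safe set in this sense. I would then argue the backup-controller property by the case distinction built into Algorithm~1. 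If $x_0\in\mathbb{S}_f\setminus\mathbb{X}_{\textup{feas}}(\Theta_0)$, the initialisation $k_{\textup{inf}}\leftarrow N-1$ forces $\pi_{\mathbb{S}_f}$ at once, so constraint satisfaction for all $k\geq0$ is immediate from $\mathbb{S}_f$ being safe. If $x_0\in\mathbb{X}_{\textup{feas}}(\Theta_0)$ and \eqref{eq:optimisation} with \eqref{eq:optimisationNL} stays feasible, then since $\theta^*\in\Theta_k$ by Assumption~\ref{ass:parameter} the inclusion $f(x,u_k,w_k,\theta^*)\in\Phi(\mathbb{X}^*_{l|k},\kappa(x,v^*_{l|k}),\Theta_k)\subseteq\mathbb{X}^*_{l+1|k}$ shows the realised state stays in the planned tube, and \eqref{eq:stateconstr} gives $(x_k,u_k)\in\mathbb{Z}$ at every step.

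Next I would handle the infeasible branch. Suppose \eqref{eq:optimisation}–\eqref{eq:optimisationNL} is feasible at $\bar k-1$ and then infeasible; Algorithm~1 replays the tail of the tube computed at $\bar k-1$, i.e.\ the inputs $\kappa(x_{\bar k-1+l},v^*_{l|\bar k-1})$ for increasing $l$. The essential point is that this tube remains a valid over-approximation under the refined parameter sets: since the update in Line~3 only shrinks $\Theta$, we have $\Theta_{\bar k-1+l}\subseteq\Theta_{\bar k-1}$, and the monotonicity half of Assumption~\ref{ass:NLmap} yields $\Phi(\mathbb{X}^*_{l|\bar k-1},\kappa(\cdot,v^*_{l|\bar k-1}),\Theta_{\bar k-1+l})\subseteq\Phi(\mathbb{X}^*_{l|\bar k-1},\kappa(\cdot,v^*_{l|\bar k-1}),\Theta_{\bar k-1})\subseteq\mathbb{X}^*_{l+1|\bar k-1}$; with $\theta^*$ in all these sets the realised states remain inside the tube. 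By \eqref{eq:stateconstr} all intermediate state–input pairs lie in $\mathbb{Z}$, and by \eqref{eq:terminalconstr} the state reaches $\mathbb{X}^*_{N|\bar k-1}\subseteq\mathbb{S}_f$ within at most $N$ steps, whereupon Algorithm~1 switches to $\pi_{\mathbb{S}_f}$ and safety for all later times follows from $\mathbb{S}_f$ being safe. Hence $\mathbb{X}_{\textup{feas}}(\Theta_k)\cup\mathbb{S}_f$ is precisely the set of initial states from which Algorithm~1 certifies safety.

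For the monotone growth of the feasible set I would take any $x\in\mathbb{X}_{\textup{feas}}(\Theta_{k-1})$ with an associated feasible tube $(\mathbb{X}_{l|k-1},v_{l|k-1})$ and observe that constraints \eqref{eq:initialconstr}, \eqref{eq:stateconstr} and \eqref{eq:terminalconstr} do not involve $\Theta$, while for \eqref{eq:optimisationNL} the monotonicity of $\Phi$ together with $\Theta_k\subseteq\Theta_{k-1}$ gives $\Phi(\mathbb{X}_{l|k-1},\kappa(\cdot,v_{l|k-1}),\Theta_k)\subseteq\Phi(\mathbb{X}_{l|k-1},\kappa(\cdot,v_{l|k-1}),\Theta_{k-1})\subseteq\mathbb{X}_{l+1|k-1}$; so the same tube is feasible for $\Theta_k$ and thus $\mathbb{X}_{\textup{feas}}(\Theta_{k-1})\subseteq\mathbb{X}_{\textup{feas}}(\Theta_k)$. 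Adding $\mathbb{S}_f$ to both sides and iterating from $k=0$ yields the claimed chain.

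The main obstacle I anticipate is the bookkeeping in the infeasible branch: one must keep track of the fact that the parameter set is refined \emph{before} each replayed input is applied, so the relevant relation is the inclusion $\Theta_{\bar k-1+l}\subseteq\Theta_{\bar k-1}$ rather than an equality, and it is exactly the monotonicity property of $\Phi$ — unused in the feasible branch — that keeps the stored backup tube valid; everything else is a direct transcription of the linear argument of Theorem~\ref{thrm:AMPSC}.
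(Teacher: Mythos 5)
Your proposal is correct and follows exactly the route the paper takes: the paper's own proof is a one-line reference to the proof of Theorem~\ref{thrm:AMPSC} with constraint \eqref{eq:optimisationNL} in place of \eqref{eq:tubeincconstr}, and your argument is precisely that transcription, with the monotonicity of $\Phi$ invoked where the paper's surrounding discussion also invokes it (validity of the stored tube under parameter-set refinement and growth of the feasible set).
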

\begin{proof}
The proof of this theorem follows the proof of Theorem~\ref{thrm:AMPSC} with the constraint \eqref{eq:optimisationNL}.
\end{proof}

For dynamics which depend affinely on the uncertain parameters $x_{k+1}=g(x_k,u_k)+h(x_k,u_k)\theta$, the nonlinear constraints can be simplified under further assumptions as shown in \cite{kohler2020nonlinear} and \cite{goncalves2016robust}.
%As the provided optimisation problem is in general nonlinear, the reformulations of the constraints under further assumptions in for dynamics which depend affinely on the uncertain parameters $x_{k+1}=f(x_k,u_k)+G(x_k,u_k)\theta$ can be used to guarantee constraint satisfaction through Algorithm~1. 
In \cite{kohler2020nonlinear} the explicit computation of an incremental exponential Lyapunov function
%which was introduced in \cite{angeli2002lyapunov}, 
is used for recursive feasibility. However, this introduces a difficult design problem, which can be relaxed in our case through \eqref{eq:terminalconstr} in combination with Definition~\ref{def:safeset}, such that only an incremental Lyapunov function subject to \cite[Assumption~5]{kohler2020nonlinear} for the tube design is required as well as a terminal safe set. 

The proposed nonlinear adaptive MPSC scheme provides a deterministic safety guarantee by using the a priori known set $\Theta_0$, as opposed to \cite{wabersich2021safe}, where distributions with potentially unbounded $\theta$ are considered, resulting in chance constraint satisfaction. %Additionally, through the use of the parameter estimation according to Assumption~\ref{ass:parameter}, the proposed adaptive MPSC scheme allows for updating the parameters online as opposed to an episodic setting, where the MPSC scheme has to be updated after every episode.
\end{appendix}

\end{document}